\theoremstyle{plain}
\newtheorem{theorem}{Theorem}
\newtheorem{lemma}[theorem]{Lemma}
\newtheorem{proposition}[lemma]{Proposition}
\theoremstyle{definition}
\newtheorem{definition}[theorem]{Definition} 
\newtheorem{observation}[theorem]{Observation}
\newtheorem*{remark*}{Remark}
\crefname{claim}{claim}{claims}
\crefname{assumption}{assumption}{assumptions}
\title{Improved Weighted Matching in the Sliding Window Model\footnote{C.A. and K.K.N. are supported by EPSRC DTP studentship EP/T517872/1. P.D. and C.K. are supported by EPSRC New Investigator Award EP/V010611/1. P.D. is also supported by Czech Science Foundation GAČR grant \#22-14872O. }} 
\author[1]{Cezar-Mihail Alexandru}
\author[2]{Pavel Dvořák}
\author[3]{Christian Konrad}
\author[4]{Kheeran K. Naidu}
\affil[1]{ca17021@bristol.ac.uk}
\affil[2]{koblich@iuuk.mff.cuni.cz}
\affil[3]{christian.konrad@bristol.ac.uk}
\affil[4]{kheeran.naidu@bristol.ac.uk}
\affil[{ }]{}
\affil[1,3,4]{Department of Computer Science, University of Bristol, UK}
\affil[2]{Tata Institute of Fundamental Research, Mumbai, India} 
\affil[2]{Faculty of Mathematics and Physics, Charles University,
Prague, Czech Republic}
\date{}
\newcommand{\Ot}{\ensuremath{\tilde{O}}}
\DeclareMathOperator{\polylog}{\text{polylog}}
\newcommand{\Stack}{{\textsf{Stack}}}
\newcommand{\R}{\mathbb{R}}
\newcommand{\alg}{{\cal ALG}}
\newcommand{\cA}{{\cal A}}
\newcommand{\cI}{{\cal I}}
\newcommand{\mwm}{$\mathcal{ALG}^\varepsilon_\textit{PS}$}
\newcommand{\w}[1]{w\bigl(#1\bigr)}
\DeclareFontFamily{U}{MnSymbolA}{}
\DeclareFontShape{U}{MnSymbolA}{m}{n}{
    <-6>  MnSymbolA5
   <6-7>  MnSymbolA6
   <7-8>  MnSymbolA7
   <8-9>  MnSymbolA8
   <9-10> MnSymbolA9
  <10-12> MnSymbolA10
  <12->   MnSymbolA12}{}
\DeclareFontShape{U}{MnSymbolA}{b}{n}{
    <-6>  MnSymbolA-Bold5
   <6-7>  MnSymbolA-Bold6
   <7-8>  MnSymbolA-Bold7
   <8-9>  MnSymbolA-Bold8
   <9-10> MnSymbolA-Bold9
  <10-12> MnSymbolA-Bold10
  <12->   MnSymbolA-Bold12}{}
\DeclareSymbolFont{MnSyA}{U}{MnSymbolA}{m}{n}
\DeclareRobustCommand{\overleftharpoon}{\mathpalette{\overarrow@\leftharpoonfill@}}
\DeclareRobustCommand{\overrightharpoon}{\mathpalette{\overarrow@\rightharpoonfill@}}
\def\leftharpoonfill@{\arrowfill@\leftharpoondown\mn@relbar\mn@relbar}
\def\rightharpoonfill@{\arrowfill@\mn@relbar\mn@relbar\rightharpoonup}
\DeclareMathSymbol{\leftharpoondown}{\mathrel}{MnSyA}{'112}
\DeclareMathSymbol{\rightharpoonup}{\mathrel}{MnSyA}{'100}
\DeclareMathSymbol{\mn@relbar}{\mathrel}{MnSyA}{'320}
\begin{document}

\maketitle

\begin{abstract}
We consider the \textsf{Maximum-weight Matching} (\textsf{MWM}) problem in the streaming sliding window model of computation. 
In this model, the input consists of a sequence of weighted edges on a given vertex set $V$ of size $n$.  
The objective is to maintain an approximation of a maximum-weight matching in the graph spanned by the $L$ most recent edges, for some integer $L$, using as little space as possible. 
Prior to our work, the state-of-the-art results were a $(3.5+\varepsilon)$-approximation algorithm for \textsf{MWM} by Biabani et al. [ISAAC'21] and a $(3+\varepsilon)$-approximation for (unweighted) \textsf{Maximum Matching} (\textsf{MM}) by Crouch et al.\ [ESA'13]. Both algorithms use space $\Ot(n)$. 

We give the following results:

\begin{enumerate}
 \item We give a $(2+\varepsilon)$-approximation algorithm for \textsf{MWM} with space $\Ot(\sqrt{nL})$. Under the reasonable assumption that the graphs spanned by the edges in each sliding window are simple, our algorithm uses space $\Ot(n \sqrt{n})$. 
 
 \item In the $\Ot(n)$ space regime, we give a $(3+\varepsilon)$-approximation algorithm for \textsf{MWM}, thereby closing the gap between the best-known approximation ratio for \textsf{MWM} and \textsf{MM}. 
\end{enumerate}

Similar to Biabani et al.'s \textsf{MWM} algorithm, both our algorithms execute multiple instances of the $(2+\varepsilon)$-approximation $\Ot(n)$-space streaming algorithm for \textsf{MWM} by Paz and Schwartzman [SODA'17] on different portions of the stream. Our improvements are obtained by selecting these substreams differently. Furthermore, our $(2+\varepsilon)$-approximation algorithm runs the Paz-Schwartzman algorithm in {\em reverse direction} over some parts of the stream, and in {\em forward direction} over other parts, which allows for an improved approximation guarantee at the cost of increased space requirements.

\end{abstract}

\newpage 
\section{Introduction}
The {\em data streaming model} is a well-established computational model that provides a framework for studying massive data set algorithms. 
The defining features of the model are restricted access to the input data and sublinear space. 
A data streaming algorithm processes its input sequentially in a single pass while maintaining only a small summary of the input in memory.

In this paper, we study the \textsf{Maximum-weight Matching} (\textsf{MWM}) problem in the {\em (streaming)  sliding window} model. 
In this variant of the streaming model, the input consists of a potentially infinite sequence $e_1, e_2, \dots$ of weighted edges on an underlying vertex set $V$ of size $n$. 
The objective is to maintain a matching of large weight in the graph spanned by the $L$ most recent edges, for some integer $L$, using as little space as possible. 
In more detail, after having processed the current edge $e_i$, for every $i$, the objective is to report an approximation of a maximum-weight matching in the graph spanned by the current sliding window $E_i := \{ e_j \ : \ \max \{i - L+1, 1 \} \le j \le i \}$. Many of the known sliding window algorithms for graph problems operate within {\em semi-streaming space} \cite{Feigenbaum2005OnGP}, i.e., within space $O(n \polylog n) = \Ot(n)$. In this paper, we will work both with the semi-streaming space regime and also consider algorithms that use more space. 

While sliding window algorithms have been studied for two decades~\cite{Datar2002MaintainingSS}, sliding window algorithms for graph problems were first
 considered by Crouch et al.~\cite{cms13} in 2013. 
Amongst other results, they showed that there is a $(3+\varepsilon)$-approximation semi-streaming sliding window algorithm for unweighted \textsf{Maximum Matching} (\textsf{MM}) and a $9.027$-approximation semi-streaming sliding window algorithm for $\textsf{MWM}$. 
While no improved results are known for \textsf{MM}, Crouch and Stubbs~\cite{Crouch2014ImprovedSA} subsequently improved upon the result for \textsf{MWM} and gave a $(6+\varepsilon)$-approximation semi-streaming algorithm, and, very recently, Biabani et al.~\cite{Biabani21_WeightedMatching} gave a $(3.5+\varepsilon)$-approximation in the semi-streaming space regime. 
The state-of-the-art results for \textsf{MM} and \textsf{MWM} in the semi-streaming sliding window model therefore do not yet line up.

\paragraph*{Our Results}
In this paper, we give two sliding window algorithms for \textsf{MWM} that both improve upon the state-of-the-art approximation guarantee of $3.5+\varepsilon$. 

As our first result, we give a substantial improvement and obtain an approximation factor of $2+\varepsilon$ at the expense of increased space requirements:

\begin{theorem}[simplified version]
\label{thm:MainResult2}
 There is a deterministic $(2+\varepsilon)$-approximation sliding window algorithm for \textsf{Maximum-weight Matching} that uses space $\Ot(\sqrt{n L})$ (with dependency on $\varepsilon$ and logarithms suppressed), for any $\varepsilon > 0$.
\end{theorem}
\newcounter{counterMainThm2} 
\setcounter{counterMainThm2}{\value{theorem}}

Some remarks are in order. First, we observe that going beyond the approximation factor of $2$, even using space $O(n^{1.999})$, would answer a long-standing open problem in graph streaming research, namely, whether there is a one-pass $(2-\Omega(1))$-approximation streaming algorithm for \textsf{MM} with space $O(n^{1.999})$. We thus cannot expect to obtain further improvements in the approximation guarantee with current techniques. 
Second, the space requirements of our algorithm depend on the sliding window length $L$. This is in contrast to the $(3.5+\varepsilon)$-approximation algorithm by Biabani et al.~\cite{Biabani21_WeightedMatching} and our second algorithm described below. Under the natural assumption that the graphs described by all sliding windows are simple, we have $L = O(n^2)$, which yields a space bound of $\Ot(n \sqrt{n})$.

As our second result, we close the gap between \textsf{MM} and \textsf{MWM} in the semi-streaming space regime. 
To this end, we give a semi-streaming sliding window algorithm for \textsf{MWM} that matches the approximation guarantee of the best-known sliding window algorithm for \textsf{MM}.

\begin{theorem}[simplified version]
\label{thm:MainResult}
 There is a deterministic semi-streaming sliding window algorithm for \textsf{Maximum-weight Matching} with approximation factor $3+\varepsilon$, for any $\varepsilon > 0$.
\end{theorem}
\newcounter{counterMainThm} 
\setcounter{counterMainThm}{\value{theorem}}

Table~\ref{tab:sliding-results} summarizes all results known for \textsf{MM} and \textsf{MWM} in the sliding window model.

\begin{table}
    \centering
    \begin{tabular}[t]{l|c|c|l}
& Approximation Factor & Space & Reference \\
\hline
\textsf{MM} &  $3 + \varepsilon$ & $\Ot(n)$ & Crouch et al.\ \cite{cms13} \\
\hline
\textsf{MWM} & $9.027$ & $\Ot(n)$ & Crouch et al.\ \cite{cms13} \\
 & $6 + \varepsilon$ & $\Ot(n)$ & Crouch and Stubbs \cite{Crouch2014ImprovedSA} \\
 & $3.5 + \varepsilon$ & $\Ot(n)$ & Biabani et al.\ \cite{Biabani21_WeightedMatching} \\
 & $3 + \varepsilon$ & $\Ot(n)$ &  This paper (Theorem~\ref{thm:MainResult}) \\
 & $2 + \varepsilon$ & $\Ot(\sqrt{n L})$ &  This paper (Theorem~\ref{thm:MainResult2})
\end{tabular} 
   
\smallskip
\caption{Known sliding window algorithms for \textsf{MM} and \textsf{MWM}.\label{tab:sliding-results}} 
\end{table}

\paragraph*{Techniques}
\label{ssec:Techniques}
Both our algorithms make use of the one-pass $(2+\varepsilon)$-approximation streaming algorithm for \textsf{MWM} by Paz and Schwartzman \cite{ps17}. Since we make use of the inner workings of the algorithm, we will discuss this algorithm first. 

\vspace{0.1cm}
\textit{Paz and Schwartzman's \textsf{MWM} Algorithm.} 
Paz and Schwartzman's original algorithm \cite{ps17} uses space $O\bigl(\frac{1}{\varepsilon} \cdot n \log^2 n\bigr)$ and is based on the \textit{local ratio technique} (see \cite{bbfr04} for further details on this technique). 
Ghaffari and Wajc~\cite{gw19} gave a simplified version and improved the space complexity to the (optimal in $n$) bound $O\bigl(\frac{\log(1/\varepsilon)}{\varepsilon} \cdot n \log n\bigr)$.

The Paz and Schwartzman algorithm with Ghaffari and Wajc's improvement works as follows. 
For every vertex $v \in V$, it maintains a potential $\varphi(v)$ that is initialized with $0$, and uses a stack data structure \textsf{Stack}. 
When an edge $e=\{u, v\}$ arrives in the stream, $e$ is pushed onto \textsf{Stack} if its weight $w(e)$ exceeds the sum of the potentials of its incident vertices by a factor of at least $(1+\varepsilon)$, i.e., $w(e) \ge (1+\varepsilon)(\varphi(u) + \varphi(v))$. 
The discrepancy between $w(e)$ and $\varphi(u) + \varphi(v)$ is denoted the {\em reduced weight} of $e$ and is abbreviated by $w'(e) := w(e) - (\varphi(u) + \varphi(v))$. 
Then, the potentials $\varphi(u)$ and $\varphi(v)$ are updated as $\varphi(u) = \varphi(u) + w'(e)$ and $\varphi(v) = \varphi(v) + w'(e)$. Last, if either $u$ or $v$ is adjacent to at least $\frac{3 \log(1/\varepsilon)}{\varepsilon} + 1$ edges in \textsf{Stack} then the oldest (and thus lightest) edge incident to the vertex is removed from \textsf{Stack}, thereby limiting the number of edges on \textsf{Stack}.
After having processed all the edges in the stream, the output matching $\hat{M}$ is computed in a post-processing step. 
The edges in \textsf{Stack} are popped one by one and greedily inserted into $\hat{M}$ if possible, i.e., as long as $\hat{M}$ remains a matching. We denote the Paz and Schwartzman algorithm by {\mwm}.
See \cref{sec:Preliminaries} for a formal description.

\vspace{0.1cm}
\textit{$(2+\varepsilon)$-approximation Algorithm with Space $\Ot(\sqrt{nL})$.} Our $(2+\varepsilon)$-approximation algorithm processes the input in blocks of size $s = \tilde{\Theta}(\sqrt{nL})$. Consider one such block $B_j$, i.e., a substream of $s$ consecutive edges. The key idea of our algorithm is to run multiple instances of the Paz-Schwartzman algorithm {\mwm} on $B_j$, however, in {\em reverse direction}. We start with a single instance $\cI_1$. At various moments during the processing of $B_j$, we fork the current instance $\cI_i$ to obtain an additional instance $\cI_{i+1}$, and then only continue to feed further edges into $\cI_{i+1}$; thus, in any moment of processing the block $B_j$, we feed the edge to only one instance of the Paz-Schwartzman algorithm. The fork happens when the sum of reduced weights $W'(\cI_{i})$ of the edges on \textsf{Stack} in $\cI_{i}$ exceeds the sum of reduced weights of the previous instance by a $1+\varepsilon$ factor, i.e., $W'(\cI_{i}) > (1+\varepsilon) \cdot W'(\cI_{i-1})$. 
As a result, we obtain instances of Paz-Schwartzman that processed suffixes of different lengths of block $B_j$ (remember that we process $B_j$ in the reverse direction), and adjacent instances have a similar sum of reduced weights (up to a $1+\varepsilon$ factor). As we will point out in Section~\ref{sec:Preliminaries}, the sum of reduced weights in an instance of Paz-Schwartzman is strongly related to the weight of a maximum-weight matching among the edges observed thus far, and we heavily exploit this property in our proofs.

In each block $B_j$, besides preparing the instances of Paz-Schwartzman as described above, we also feed the edges of $B_j$ (in the forward direction) into those instances of Paz-Schwartzman that were prepared during previous blocks $B_{j'}$, with $j' < j$, and that are still {\em alive}, i.e., have only been fed edges from the current sliding window. As such, each instance of Paz-Schwartzman is executed on a portion of the stream in the reverse direction, followed by all the subsequent edges from more recent blocks in the forward direction until the current edge. The output produced when processing the current edge is the output of the oldest alive instance of Paz-Schwartzman.

Consider two adjacent instances $\cI_{i}$ and $\cI_{i+1}$ of Paz-Schwartzman prepared in the same block, where $\cI_i$ has processed only a subset of the edges of $\cI_{i+1}$ and their sums of reduced weights $W'$ are such that $W'(\cI_{i+1}) \approx (1+\varepsilon) W'(\cI_{i})$. The key benefit of executing Paz-Schwartzman in the reverse direction as opposed to forward is that the edges processed by $\cI_{i+1}$ but not by $\cI_{i}$ contribute to the sum of reduced weights only with an $\varepsilon$-fraction of $W'(\cI_{i})$ (since  $W'(\cI_{i+1}) - W'(\cI_{i}) \approx \varepsilon W'(\cI_i)$). When $\cI_i$ is the oldest alive instance and thus constitutes the output of our algorithm, we only {\em miss} an $\varepsilon$-fraction in terms of reduced weights of the edges in the sliding window that $\cI_{i}$ has not considered. We remark that this property could not be established if we run Paz-Schwartzman in the forward direction. 
This property together with the fact that the sum of reduced weights is related to the weight of a maximum-weight matching allows us to establish the approximation factor of our algorithm. 

Since only the $L$ most recent edges are relevant, our algorithm considers at most $\frac{L}{s} = \tilde{\Theta}(\sqrt{L/n})$ blocks simultaneously. Each block consists of $\Ot(1)$ instances of Paz-Schwartzman. Since each of these instances requires space $\Ot(n)$, we obtain the final space bound of $\Ot(n) \cdot \frac{L}{s} = \Ot(\sqrt{n L})$.

\vspace{0.1cm}
\textit{$(3+\varepsilon)$-approximation Semi-streaming Algorithm.} 
Our $(3+\varepsilon)$-approximation algorithm follows similar arguments as the $(3.5+\varepsilon)$-approximation algorithm by Biabani et al.~\cite{Biabani21_WeightedMatching}. We will therefore first explain the techniques behind Biabani et al.'s algorithm and then discuss our new ideas which yield the improved approximation guarantee.  

Biabani et al.'s algorithm combines the {\em smooth histogram technique} for sliding window algorithms by Braverman and Ostrovsky~\cite{Braverman2007SmoothHF} with the Paz and Schwartzman algorithm.
Braverman and Ostrovsky showed that if a function $f$ fulfills certain smoothness criteria\footnote{Informally, a function $f: 2^X \to \R$ is considered to be {\em smooth} if it satisfies the following: If $f(A)$ is close to $f(B)$ for $A,B \subseteq X$, for a suitable notion of closeness, then the values $f(A \cup C)$ and $f(B \cup C)$ are close for all $C \subseteq X$.} then a sliding window algorithm for approximating $f$ can be obtained from a traditional (non-sliding window) streaming algorithm for $f$ at the expense of only a logarithmic increase in the space requirements (as long as the approximation factor of the streaming algorithm is constant), and a slight increase in the approximation factor. 
In the context of \textsf{MWM}, the smoothness criteria are captured by Biabani et al.~\cite{Biabani21_WeightedMatching} via the notion of {\em lookahead} algorithm.
\begin{definition}[$(f, \alpha, \beta)$-lookahead algorithm~\cite{Biabani21_WeightedMatching}] \label{def:lookahead}
Let $\beta \in (0,1)$ and $\alpha > 0$ be real numbers. 
Let $X$ be a ground set, $S$ a stream of items of $X$, and let $f: 2^X \rightarrow \mathbb{R}^+$ be a non-decreasing function.
We say that a streaming algorithm $\mathcal{ALG}$ is a {\em $(f,\alpha, \beta)$-lookahead algorithm} if, for any partitioning of $S$ into three substreams $A,B,C$ with $\mathcal{ALG}(B) \ge (1-\beta) \cdot \mathcal{ALG}(AB)$, the following holds: $f(ABC) \le \alpha \cdot \mathcal{ALG}(BC)$.
\end{definition}
In this paper, the stream $AB$ denotes the concatenation of streams $A$ and $B$ (as it is used in the previous definition).
We observe that the previous definition holds for real-valued non-decreasing functions. 
In the context of \textsf{MWM}, the weight of a maximum-weight matching rather than the matching itself fulfills these conditions. 
We will therefore consider the problem of determining the weight of a maximum-weight matching instead, and, in order to be able to output an actual matching as required in \textsf{MWM}, we will rely on the fact that the underlying algorithm which we will consider also maintains the actual matching itself. Furthermore, we will write $\textsf{MWM}(S)$ to denote the weight of a maximum-weight matching in stream $S$.

Biabani et al.~\cite{Biabani21_WeightedMatching} showed that if there is a $(\textsf{MWM}, \alpha, \beta)$-lookahead algorithm that uses space $s$ then there exists a sliding-window algorithm with approximation ratio $\alpha$ and space $O\bigl(\frac{1}{\beta} \cdot s \log \sigma\bigr)$, where $\sigma = \frac{n}{2} \cdot w_{\text{max}} / w_{\text{min}}$ and $w_{\text{max}}$ and $w_{\text{min}}$ are the maximum and minimum weights of an edge in the input stream, respectively. 
Observe that, under the usual assumption that $w_{\text{max}} / w_{\text{min}}$ is polynomial in $n$, we have $\log \sigma = O(\log n)$.

The main part of their analysis is to show that a monotonic version of the Paz and Schwartzman algorithm, denoted $\alg_{\textit{mon}}$, constitutes a $(\textsf{MWM}, (3.5 +  \varepsilon), \beta)$-lookahead algorithm, for small values of $\varepsilon$ and $\beta \le \varepsilon / 9$. 
Combined, this yields a $(3.5+\varepsilon)$-approximation semi-streaming sliding window algorithm for \textsf{MWM}.

We first note (see \Cref{sec:lb} for details) that the analysis of Biabani et al.\ is best possible in that the Paz and Schwartzman algorithm and its monotonic version are no better than $(\textsf{MWM}, 3.5, \beta)$-lookahead algorithms. 
The smooth histogram technique applied to lookahead algorithms as defined in \Cref{def:lookahead} thus cannot give an improved approximation guarantee when Paz and Schwartzman's algorithm is used as the underlying algorithm.  
 
To illustrate our improvement, we first provide insight into the structure of Biabani et al.'s analysis. 
In order to prove that $\alg_{\textit{mon}}$ is a $(\textsf{MWM}, 3.5 + \varepsilon, \beta)$-lookahead algorithm, Biabani et al. relate $\textsf{MWM}(ABC)$ to the output of $\alg_{\textit{mon}}$ on various substreams of $ABC$:
\begin{eqnarray}
\textsf{MWM}(ABC) & \le & 2(1+\varepsilon)\cdot \bigl( \alg_{\textit{mon}}(AB) + \alg_{\textit{mon}}(BC) \bigr)  \nonumber \\
& & - \  \frac{1}{2(1+\varepsilon)}\cdot \alg_{\textit{mon}}(B) \ .   \label{eqn:392}
\end{eqnarray}
They subsequently use the smoothness assumption from Definition~\ref{def:lookahead} and a monotonicity property of $\alg_{\textit{mon}}$ to relate $\alg_{\textit{mon}}(AB)$ and $\alg_{\textit{mon}}(B)$ to $\alg_{\textit{mon}}(BC)$. 
This ultimately yields the desired bound $\textsf{MWM}(ABC) \le  (3.5 + \varepsilon) \cdot \alg_{\textit{mon}}(BC)$.

To obtain our improvement, we observe that a similar inequality to Inequality~\ref{eqn:392} can be obtained by considering {\em sums of reduced weights} of the respective runs of $\alg_{\textit{mon}}$ instead of the weights of the output matchings of $\alg_{\textit{mon}}$ on the different substreams. 
This idea is motivated by the fact that the sum of reduced weights is a lower bound on the weight of the matching produced by the algorithm, which can therefore give a more precise analysis. 
However, when departing from such an inequality involving sums of reduced weights, we unfortunately cannot immediately complete our analysis since, unlike when considering the outputs of $\mathcal{ALG}_\textit{mon}$ directly, we do not have a sufficient smoothness property regarding sums of reduced weights at our disposal that would allow us to bound these quantities. 

Our key idea is as follows. To establish the necessary smoothness properties, we employ the smooth histogram technique directly on sums of reduced weights rather than on the size of the output matching itself. To be consistent with the literature and to illustrate the increment over Biabani et al.'s work, we encapsulate this idea via an alternative definition of lookahead algorithms, denoted {\em refined lookahead algorithms} (see Definition~\ref{def:refined-lookahead} for details), which enables us to incorporate the required smoothness property of sums of reduced weights into the definition. 
We then prove that, similar to lookahead algorithms, refined lookahead algorithms can still be turned into sliding window algorithms with a similar small increase in the space complexity. 
Last, we finish our argument by proving that {\mwm} is a refined lookahead algorithm with an approximation factor of $3+\varepsilon$, which establishes our result.

\paragraph*{Further Related Work}
The sliding window model can be regarded as a streaming insertion-deletion model with highly structured deletions since, for each incoming edge, the oldest edge in the current window is deleted. 
Interestingly, the complexities of \textsf{MM} and \textsf{MWM} in the sliding window model are much closer to those in the insertion-only model, where no deletions are allowed, as opposed to the insertion-deletion model, where arbitrary deletions are allowed. 
In the insertion-only model,  the currently best one-pass algorithm known for \textsf{MM} is the \textsc{Greedy} matching algorithm, which produces a $2$-approximation and uses semi-streaming space $\Ot(n)$. 
It is known that computing a $(1+\ln 2)$-approximation requires strictly more space than $\Ot(n)$ \cite{Kapralov2021SpaceLB}, see also the previous lower bounds \cite{gkk12,Kapralov2013BetterBF}. 
It remains a key open problem to close this gap. Regarding  \textsf{MWM}, a series of works  \cite{Feigenbaum2005OnGP,m05,z12,Epstein2011ImprovedAG,Crouch2014ImprovedSA,ps17,gw19} culminated in the Paz and Schwartzman algorithm, which closes the gap between \textsf{MWM} and \textsf{MM} from an algorithmic perspective in the insertion-only model. 
In the insertion-deletion model, where arbitrary previously inserted edges can be deleted again, it is known that space $\Theta(n^2 / \alpha^3)$ is necessary and sufficient for computing an $\alpha$-approximation to \textsf{MM}, see \cite{as22} for the algorithm and \cite{dk20} for a matching lower bound (see also the previous works \cite{k15,akly16}). 
\textsf{MWM} reduces easily to \textsf{MM} in the insertion-deletion model, by, for example, grouping edges of similar weights into groups and running the \textsf{MM} algorithm a logarithmic number of times in parallel at the expense of only a marginal increase in the approximation factor.

The sliding window model is inspired by the problem of inferring statistics of data occurring within a certain time frame over a continuous stream of data (e.g., maintaining the number of distinct users who have accessed a social media page in the last 24 hours). 
The model was introduced by Datar et al.~\cite{Datar2002MaintainingSS}, and Crouch et al.~\cite{cms13} were the first to consider graph problems in the sliding window model. 
Among others, they showed that testing \textsf{Connectivity} and \textsf{Bipartiteness}, and constructing $(1+\varepsilon)$-sparsifiers can be done in the sliding window model using semi-streaming space. 
Furthermore, as previously mentioned, they also gave the first sliding window algorithms for \textsf{MM} and \textsf{MWM}. 

The smooth histogram technique used in our work was introduced by Braverman and Ostrovsky~\cite{Braverman2007SmoothHF} and can be regarded as an improvement of the exponential histogram technique \cite{Datar2002MaintainingSS} for smooth functions. 
This technique has successfully been applied to a wide range of problems, including the computation of coresets \cite{Wang2019CoresetsFM} and for clustering problems \cite{Braverman2016ClusteringPO}.

\paragraph*{Outline}
We first give notation and a discussion of Paz and Schwartzman's algorithm including its properties in Section~\ref{sec:Preliminaries}. 
The $(2+\varepsilon)$-approximation is presented in Section~\ref{sec:twoapprox}.
The semi-streaming $(3 + \varepsilon)$-approximation via the refined lookahead algorithms is then given in Section~\ref{sec:threeapprox}.
Finally, we conclude with open questions in Section~\ref{sec:conclusion}. 

\section{Preliminaries}
\label{sec:Preliminaries}

In this section, we start with some important notation and a formal description of the improved version of Paz and Schwartzman's algorithm by Ghaffari and Wajc (see \Cref{alg:Streaming}). 
This is followed by some key insights about the algorithm.

Let $S$ be an input stream representing an edge-weighted graph $G = (V, E, w)$ with a weight function $w: E \to \R^+$.
We assume that each edge, including its weight, can be stored in a single word of memory; as such, all our space bounds are in terms of words of memory. 
For any subset of edges $F \subseteq E$, let $w(F) = \sum_{e \in F} w(e)$ be the sum of their weights. 
Then, for any maximum-weight matching in $G$, denoted by $M^*(S)$, we have that $\textsf{MWM}(S) = \w{M^*(S)}$.

\begin{algorithm*}[ht]
 \caption{{\mwm} (Paz and Schwartzman's algorithm with Ghaffari and Wajc's improvements)}
 \label{alg:Streaming}
 \vspace{2mm}
 \textbf{Input:} A stream $S$ of weighted edges  
 \vspace{2mm}
  \hrule
\vspace{2mm}
\textbf{Initialization:}
\begin{algorithmic}[1]
\State \textsf{Stack} $\leftarrow$ an empty stack
\For{every vertex $v \in V$} $\varphi(v) \leftarrow 0$ \EndFor
\label{streaming:initend}
\end{algorithmic}
\vspace{2mm} 
\hrule
\vspace{2mm}
 \textbf{Streaming:}
 \begin{algorithmic}[1]  \setcounter{ALG@line}{\getrefnumber{streaming:initend}}
  \While{a new edge $e = \{u,v\}$ of the stream $S$ is revealed}
   \If{$w(e) < (1 + \varepsilon)\cdot \bigl(\varphi(u) + \varphi(v)\bigr)$} $w'(e) \leftarrow 0$
   \Else
   \State $w'(e) \leftarrow w(e) - \bigl(\varphi(u) + \varphi(v)\bigr)$ \Comment{$w'(e)$ is the \emph{reduced weight of $e$}}
   \State $\varphi(u) \leftarrow \varphi(u) + w'(e);$ $\varphi(v) \leftarrow \varphi(v) + w'(e)$ \Comment{update potentials}
   \State {\sf Stack.Push}($e$)
   \EndIf
   \For{$x \in \{u,v\}$} \Comment{optimizing space}
   \If{$x$ is adjacent to $> \frac{3 \log(1/\varepsilon)}{\varepsilon} + 1$ edges in {\sf Stack}}
   \State Remove the oldest edge adjacent to $x$ from {\sf Stack}
   \EndIf
   \EndFor
  \EndWhile
  \label{streaming:streamend}
 \end{algorithmic}
 \vspace{2mm}
 \hrule
\vspace{2mm} 
 \textbf{Postprocessing:}
 \begin{algorithmic}[1] \setcounter{ALG@line}{\getrefnumber{streaming:streamend}}
  \State Let $\hat{M}$ be an empty matching
  \While{{\sf Stack} is not empty}
  \State $e \leftarrow$ {\sf Stack.Pop}()
  \If {$\hat{M} \cup \{e\}$ is a matching} $\hat{M} \leftarrow \hat{M} \cup \{e\}$
  \EndIf
  \EndWhile
  \State \Return{$\hat{M}$} \Comment{a \textsc{Greedy} matching of the edges in {\sf Stack}}
 \end{algorithmic}

\end{algorithm*}

{\mwm} (\Cref{alg:Streaming}) uses the notions of reduced weights and vertex potentials. 
These are respectively represented by the functions $w'_S: E \to \R^+_0$ and $\varphi_S: V \to \R^+_0$ when the algorithm is executed on a stream $S$. 
The sum of all reduced weights is denoted by $W'_S = \sum_{e \in S} w'_S(e)$. 
For any edge in the stream, its reduced weight is non-negative and is unchanged by the processing of any subsequent edges.
In particular, for a stream $AB$ and any edge $e \in A$ (i.e., the edge $e$ is present in the stream $A$), we have $w'_{A}(e) = w'_{AB}(e) \geq 0$. 
Hence, the sum of the reduced weights is a non-decreasing function, i.e., $W'_A \leq W'_{AB}$. 
The output matching of {\mwm} on stream $S$ is denoted by $\hat{M}(S)$.

Ghaffari and Wajc's analysis of the algorithm reveals the following key observations and results which we later use in our proofs.

\begin{observation}[Ghaffari and Wajc \cite{gw19}] \label{obs:StreamingNumEdges}
 At any moment there are $O\left(\frac{\log(1/\varepsilon)}{\varepsilon}\cdot n\right)$ edges stored in {\sf Stack} during the execution of {\mwm}.
\end{observation}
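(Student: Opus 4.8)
The statement to prove is Observation~\ref{obs:StreamingNumEdges}: at any moment during the execution of \mwm, {\sf Stack} holds at most $O\bigl(\frac{\log(1/\varepsilon)}{\varepsilon} \cdot n\bigr)$ edges.

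\textbf{Plan of the proof.} The plan is to read the bound directly off the explicit space-management rule built into \Cref{alg:Streaming}. After processing any edge, the algorithm runs the inner \textbf{for} loop over the two endpoints $x \in \{u,v\}$ and, whenever $x$ is adjacent to strictly more than $\frac{3\log(1/\varepsilon)}{\varepsilon}+1$ edges in {\sf Stack}, removes the oldest such edge. Since at most one edge is pushed per arriving edge, a vertex's stack-degree can only grow by one at a time, so immediately after this clean-up step every vertex $x \in V$ is adjacent to at most $\frac{3\log(1/\varepsilon)}{\varepsilon}+1$ edges of {\sf Stack}. First I would formalise this invariant: define $d_{\sf Stack}(x)$ to be the number of edges in {\sf Stack} incident to $x$, and argue by induction over the stream that $d_{\sf Stack}(x) \le \frac{3\log(1/\varepsilon)}{\varepsilon}+1$ holds for all $x$ at every moment \emph{after} the inner \textbf{for} loop has executed (the base case being the empty stack; the inductive step observing that a push increments at most two vertex-degrees by one and the subsequent removals restore the bound).

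Second, I would convert the per-vertex degree bound into a bound on the total number of stack edges. Each edge in {\sf Stack} contributes to the stack-degree of exactly two vertices, so
\[
2 \cdot |{\sf Stack}| \;=\; \sum_{x \in V} d_{\sf Stack}(x) \;\le\; n \cdot \Bigl(\tfrac{3\log(1/\varepsilon)}{\varepsilon}+1\Bigr),
\]
giving $|{\sf Stack}| \le \frac{n}{2}\bigl(\frac{3\log(1/\varepsilon)}{\varepsilon}+1\bigr) = O\bigl(\frac{\log(1/\varepsilon)}{\varepsilon}\cdot n\bigr)$, as claimed. The only subtlety is the phrase ``at any moment'': strictly speaking the degree bound might be momentarily violated in the brief window between the \textsf{Stack.Push} and the execution of the clean-up \textbf{for} loop, but at that point the stack holds at most one more edge than the bounded state, which does not affect the asymptotic estimate; I would note this explicitly so the statement is literally correct at all times.

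\textbf{Main obstacle.} There is essentially no hard step here — this is a bookkeeping argument — so the ``obstacle'' is purely one of care rather than depth: making sure the invariant is stated for the right point in the loop (post-clean-up), handling the transient over-by-one state, and confirming that the removal rule indeed targets an edge incident to the offending vertex (so that the degree actually decreases). Once those are pinned down, the counting is immediate. If one wishes to attribute the result to Ghaffari and Wajc~\cite{gw19} as stated, one can simply remark that this is exactly the design goal of their space-optimisation modification to the Paz--Schwartzman algorithm and the above is the one-line justification.
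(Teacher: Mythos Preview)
Your argument is correct. The paper itself does not prove this observation at all: it is stated as a result of Ghaffari and Wajc~\cite{gw19} and simply cited, with no accompanying justification in the present paper. Your write-up is exactly the intended one-paragraph verification of their space-optimisation rule (bounded per-vertex stack-degree via the clean-up loop, then a handshaking count), and the care you take with the transient over-by-one state and the post-clean-up invariant is appropriate. In short, there is nothing to compare against in the paper; your proof is the standard argument and is sound.
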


\begin{proposition}[Ghaffari and Wajc \cite{gw19}]
 \label{lem:WeightBound}
 For any edge $e = \{u,v\}$ in a stream $S$, after the execution of {\mwm}, its weight is bounded as $w(e) \leq (1 + \varepsilon)\cdot \bigl(\varphi_S(u) + \varphi_S(v)\bigr)$.
\end{proposition}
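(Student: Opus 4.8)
The plan is to argue directly from the update rule, relying on the single structural fact that the vertex potentials are monotonically non-decreasing throughout the execution of {\mwm}. First I would record this monotonicity: the potentials are modified only when an edge is pushed onto {\sf Stack}, and there each endpoint's potential is increased by the reduced weight $w'(e) \ge 0$; the space-optimization step that removes old edges from {\sf Stack} does not touch any potential. Consequently, for every vertex $x$ the value $\varphi(x)$ at any point during the stream is at most its final value $\varphi_S(x)$.

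Next, I fix an edge $e = \{u,v\}$ and consider the instant it is processed, splitting into the two branches of the \textbf{if}-statement. In the first branch, $w(e) < (1+\varepsilon)\bigl(\varphi(u) + \varphi(v)\bigr)$ holds at that instant with the current (pre-processing) potentials; since potentials never decrease afterwards, $\varphi_S(u) + \varphi_S(v) \ge \varphi(u) + \varphi(v)$, and the claimed inequality follows immediately.

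In the second branch, $e$ is pushed and we set $w'(e) = w(e) - \bigl(\varphi(u)+\varphi(v)\bigr)$, after which both $\varphi(u)$ and $\varphi(v)$ are increased by $w'(e)$. Writing $\varphi_{\mathrm{old}}$ for the potentials just before this update, the potentials just after it satisfy $\varphi(u) + \varphi(v) = \varphi_{\mathrm{old}}(u) + \varphi_{\mathrm{old}}(v) + 2w'(e) \ge \varphi_{\mathrm{old}}(u) + \varphi_{\mathrm{old}}(v) + w'(e) = w(e)$, using $w'(e) \ge 0$. Applying monotonicity once more yields $\varphi_S(u) + \varphi_S(v) \ge w(e)$, and since $1 + \varepsilon \ge 1$ and potentials are non-negative we conclude $w(e) \le (1+\varepsilon)\bigl(\varphi_S(u) + \varphi_S(v)\bigr)$.

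I do not expect a genuine obstacle here; the only points needing care are (i) confirming that the stack-trimming step leaves potentials unchanged, so that monotonicity of $\varphi$ truly holds, and (ii) keeping the distinction between the potentials at the moment $e$ is processed and the final potentials $\varphi_S$ — exactly the gap bridged by the monotonicity observation. It is worth noting in passing that further edges incident to $u$ or $v$ processed after $e$ may raise these potentials even more, but this only strengthens the inequality.
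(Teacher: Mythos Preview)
Your argument is correct: monotonicity of the potentials plus the two-branch case split on the \textbf{if}-statement is exactly the right mechanism, and both cases are handled cleanly. Note that the paper does not actually supply its own proof of this proposition; it is stated as a result of Ghaffari and Wajc~\cite{gw19} and used as a black box, so there is nothing to compare against here beyond observing that your proof is the standard one.
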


\begin{proposition}[Ghaffari and Wajc \cite{gw19}]
\label{thm:Streaming2apx}
 Let $\varepsilon > 0$ and $S$ be a stream of edges.
 Then, the following inequalities hold: 
 \begin{align*}
  \w{M^*(S)} 
  &\geq W'_S, \\
  \w{\hat{M}(S)} 
  &\geq \frac{1}{1 + 4\varepsilon} \cdot W'_S = \frac{1}{2(1 + 4\varepsilon)} \sum_{v \in V} \varphi_S(v) 
  \geq \frac{1}{2(1 + 4\varepsilon)(1 + \varepsilon)} \cdot \w{M^*(S)}. 
 \end{align*}
\end{proposition}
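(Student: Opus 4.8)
The plan is to prove the chain of relations in the statement from its easy ends inward, reducing everything to one charging argument. Throughout, call an edge \emph{pushed} if \Cref{alg:Streaming} ever executed {\sf Stack.Push} on it during the run on $S$, and for a pushed edge $e=\{u,v\}$ write $\varphi^-_e(x)$ for the value of $\varphi(x)$ just before $e$ is processed, so that $w'_S(e)=w(e)-\varphi^-_e(u)-\varphi^-_e(v)$. Two of the four relations are then immediate. First, $W'_S=\tfrac12\sum_{v\in V}\varphi_S(v)$, since the only change to the potentials caused by processing an edge $\{u,v\}$ is, when it is pushed, to add its reduced weight to each of $\varphi(u),\varphi(v)$, so $\sum_v\varphi_S(v)$ counts each reduced weight exactly twice. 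Second, $\sum_{v\in V}\varphi_S(v)\ge\tfrac1{1+\varepsilon}\w{M^*(S)}$ follows from \Cref{lem:WeightBound}: summing $w(e)\le(1+\varepsilon)(\varphi_S(u)+\varphi_S(v))$ over the edges $e=\{u,v\}$ of a fixed maximum-weight matching, with each vertex contributing to at most one term, gives $\w{M^*(S)}\le(1+\varepsilon)\sum_v\varphi_S(v)$.

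For $\w{M^*(S)}\ge W'_S$, I would introduce the matching $\hat{M}_0$ obtained by listing all pushed edges in the reverse of the order in which they were pushed and greedily keeping each one that is still disjoint from those kept so far; this is a legitimate matching in $G$ (indeed it is what the post-processing of \Cref{alg:Streaming} would return if no edge were ever removed from {\sf Stack}), so $\w{M^*(S)}\ge\w{\hat{M}_0}$ and it suffices to show $\w{\hat{M}_0}\ge W'_S$. Charge $W'_S=\sum_{e\text{ pushed}}w'_S(e)$ onto $\hat{M}_0$ as follows: a pushed edge $e$ is either in $\hat{M}_0$, or was rejected because some endpoint $x$ was already covered by an edge $e'\in\hat{M}_0$ examined earlier, hence pushed later than $e$; charge $w'_S(e)$ to one such $e'$. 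For a fixed $e'=\{x,y\}\in\hat{M}_0$, the edges charged to it through $x$ are all incident to $x$ and pushed before $e'$, so their total is at most $\sum_{f\ni x,\ f\text{ pushed before }e'}w'_S(f)=\varphi^-_{e'}(x)$, and likewise through $y$; adding $e'$'s own contribution $w'_S(e')$, the total charge on $e'$ is at most $\varphi^-_{e'}(x)+\varphi^-_{e'}(y)+w'_S(e')=w(e')$. Summing over $\hat{M}_0$ gives $W'_S\le\sum_{e'\in\hat{M}_0}w(e')=\w{\hat{M}_0}$.

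For $\w{\hat{M}(S)}\ge\tfrac1{1+4\varepsilon}W'_S$, split the pushed edges into those surviving on the final {\sf Stack} and those removed from it, writing $W'_S=W'_{\mathrm{surv}}+W'_{\mathrm{rem}}$. The exact same charging applies to the true post-processing (which pops the surviving edges in reverse push order), yielding $W'_{\mathrm{surv}}\le\w{\hat{M}(S)}$. For the removed edges, attribute each removal to the endpoint whose degree in {\sf Stack} exceeded the cap $C:=\tfrac{3\log(1/\varepsilon)}{\varepsilon}+1$ and triggered it, and fix a vertex $x$ with $f_1,\dots,f_m$ its pushed edges in push order. Any $f_j$ attributed to $x$ satisfies $j\le m-C$, since at its removal at least $C$ further edges at $x$ pushed after it were still on the stack. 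The push condition forces $w'_S(f_j)\ge\varepsilon\,\varphi^-_{f_j}(x)$, hence $\varphi(x)$ grows by a factor at least $1+\varepsilon$ at each push at $x$, so $\sum_{j\le m-C}w'_S(f_j)\le(1+\varepsilon)^{-C}\sum_{j=1}^m w'_S(f_j)=(1+\varepsilon)^{-C}\varphi_S(x)$. Summing over $x$ gives $W'_{\mathrm{rem}}\le(1+\varepsilon)^{-C}\sum_v\varphi_S(v)=2(1+\varepsilon)^{-C}W'_S$, a negligible fraction of $W'_S$; therefore $W'_S\le\w{\hat{M}(S)}+2(1+\varepsilon)^{-C}W'_S$, which rearranges to $W'_S\le(1+4\varepsilon)\,\w{\hat{M}(S)}$.

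The only nontrivial ingredient is the charging argument, and within it the delicate point is the bookkeeping for $\hat{M}(S)$: one must show that the reduced weight lost to greedy rejections is fully absorbed into $\w{\hat{M}(S)}$ — the potential identity $w(e')=\varphi^-_{e'}(x)+\varphi^-_{e'}(y)+w'_S(e')$ is exactly what makes this tight — and, separately, that the reduced weight of the edges evicted from {\sf Stack} for the space optimisation is only an $\varepsilon$-small fraction of $W'_S$; this is where the $(1+\varepsilon)$ multiplicative threshold in the push rule and the $\tfrac{3\log(1/\varepsilon)}{\varepsilon}+1$ cap on the stack degree of a vertex (cf.\ \Cref{obs:StreamingNumEdges}) are both used, and what the slack in $1+4\varepsilon$ pays for. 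Everything else is a one-line consequence of \Cref{lem:WeightBound} and the update rule.
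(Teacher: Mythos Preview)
The paper does not give its own proof of this proposition; it is quoted as a result of Ghaffari and Wajc~\cite{gw19}, and the text following the statement only remarks that the equality $W'_S=\tfrac12\sum_{v}\varphi_S(v)$ is because each potential is the sum of reduced weights of the incident pushed edges, and that the last inequality is \cref{lem:WeightBound} applied to a maximum matching. Your write-up is correct and is essentially a reconstruction of the argument in the cited reference: the charging of $W'_S$ onto the reverse-greedy matching via $w(e')=\varphi^-_{e'}(x)+\varphi^-_{e'}(y)+w'_S(e')$ is the standard local-ratio unwinding from Paz--Schwartzman, and the bound on $W'_{\mathrm{rem}}$ through the geometric growth of $\varphi(x)$ under the $(1+\varepsilon)$ push threshold together with the degree cap $C=\tfrac{3\log(1/\varepsilon)}{\varepsilon}+1$ is precisely Ghaffari--Wajc's space-optimisation analysis, which indeed gives $(1+\varepsilon)^{-C}\le\varepsilon^3$ and hence the $1+4\varepsilon$ slack. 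There is nothing to correct; you have supplied the proof the paper defers to~\cite{gw19}.
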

Note that \cref{thm:Streaming2apx} uses the important fact that $W'_S = \frac{1}{2} \sum_{v \in V} \varphi_S(v)$ as the potential of a vertex $v$ is actually the sum of reduced weights of edges incident to $v$.
Furthermore, its last inequality is due to \cref{lem:WeightBound} since each vertex in a matching is incident to at most one edge. 
Indeed, \Cref{thm:Streaming2apx} shows that {\mwm} is a $(2 + \varepsilon)$-approximation streaming algorithm for \textsf{MWM}, and, by \cref{obs:StreamingNumEdges}, {\mwm} uses space $O(\frac{\log(1/\varepsilon)}{\varepsilon} \cdot n)$ (in words).

\section{\texorpdfstring{$(2 + \varepsilon)$}{(2 + epsilon)}-approximation Sliding Window Algorithm}
\label{sec:twoapprox}
In this section, we give a $(2 + \varepsilon)$-approximation sliding window algorithm for \textsf{MWM} with space $\tilde{O}(\sqrt{nL})$, where $L$ is the length of the sliding window.

\begin{algorithm*}[ht]
  \caption{{\sc MWM Sliding Window Algorithm}} 
  \label{alg:mwmsliding}
  \vspace{2mm}
  \textbf{Input:} A stream $S$ with a sliding window of length $L$ \\
  $\cA$: {\mwm} with sum of reduced weights $W'$ and output matching $\hat{M}$.
  \vspace{2mm} 
  \hrule
  \vspace{2mm}
  \textbf{Initialization:} 
  \begin{algorithmic}[1]
  \State \textsf{Stack} $\leftarrow$ an empty stack
  \State $k \leftarrow 0$ \Comment{Number of blocks}
  \State Parameter $s \leftarrow \Bigl\lfloor  \frac{\sqrt{ n \cdot L \cdot \log{1/\varepsilon} \cdot \log{\sigma}}}{\varepsilon} \Bigr\rfloor$ for $\sigma = \frac{n}{2}\cdot w_{\text{max}}/w_{\text{min}}$.
  \label{abcdefgh}
  \end{algorithmic}
  \vspace{2mm} 
  \hrule
  \vspace{2mm}
  \textbf{Streaming:} 
  \begin{algorithmic}[1] \setcounter{ALG@line}{\getrefnumber{abcdefgh}}
  \While{a new item $e$ of the stream $S$ is revealed}
  \State Feed $e$ to all existing instances of $\cA$
  \State Delete all instances of $\cA$ which have processed more than $L$ edges
  \State \textsf{Stack.Push}($e$)
  \If{$|\textsf{Stack}| \geq s$} \Comment{Create new instances of $\cA$}
  \State $k \leftarrow k + 1$
  \label{ln:StartCreating}
    \State Let $\cI^k_1$ be a new instances of $\cA$
    \State Let $W'_\text{prev} \leftarrow 0, i \leftarrow 1$
    \While{\textsf{Stack} is not empty} \Comment{Process all edges in reverse order of arrival}
      \State $e' \leftarrow \textsf{Stack.Pop}$ and feed $e'$ to $\cI^k_i$
      \If{$W'(\cI^k_i) > (1+\varepsilon)\cdot W'_\text{prev}$} \Comment $W'(\cI^k_i)$ exceeds $(1 + \varepsilon)\cdot W'(\cI^k_{i-1})$
      \label{ln:NewInstance}
        \State Create a new instance $\cI^k_{i+1}$ as a copy of $\cI^k_i$
        \State $W'_\text{prev} \leftarrow  W'(\cI^k_i), i \leftarrow i + 1$
      \EndIf
    \EndWhile
  \EndIf
  \label{ln:EndCreating}
  \If{any instance of $\cA$ exists} 
    \State \textbf{report} output matching of the instance that has processed the most edges
  \Else{} \textbf{report} the maximum-weight matching of the edges in \textsf{Stack}
  \EndIf
  \EndWhile
  \end{algorithmic}
  \end{algorithm*}

For brevity of notation, denote by $\cA$ the Paz and Schwartzman algorithm {\mwm}, which our algorithm (see \Cref{alg:mwmsliding} for a listing) maintains several instances of.
When the current edge $e$ of the stream arrives, the algorithm
feeds $e$ to all existing instances of $\cA$, then deletes any instance that has processed more than $L$ edges, i.e., the ones that could return edges outside the sliding window.
The edge $e$ is subsequently pushed onto {\Stack}.

When {\Stack} has accumulated $s$ edges, \Cref{alg:mwmsliding} uses it to create several instances of $\cA$:
It first creates a new instance $\cI_1$ of $\cA$, then starts to pop the edges from {\Stack}, processing the edges in reverse order of their arrival. 
When an edge $e$ is popped it is fed into the last created instance $\cI_i$ (initially $\cI_1$).
At any given moment, the algorithm stores the sum of reduced weights $W'(\cI_{i-1})$ of the previous instance (initially set to $0$).
If the sum of reduced weights $W'(\cI_i)$ of the latest instance $\cI_i$ exceeds $(1 + \varepsilon)\cdot W'(\cI_{i-1})$, then a new instance $\cI_{i + 1}$ is created as a copy of $\cI_i$. 
This procedure is repeated until {\Stack} is empty again.

After processing edge $e$, the algorithm reports the matching computed by the instance of $\cA$ which has processed the most edges of the current sliding window.
If no instances have been created yet, then it reports an exact solution on the edges stored in {\Stack}.

Overall, \Cref{alg:mwmsliding} maintains multiple runs of $\cA$, each fed with different suffixes of the sliding window. 
It uses {\Stack} to implicitly partition the stream $S$ into blocks $B_1, B_2, \dots$ of $s$ edges each, thus processing it block by block.
Each block $B_j$ is then processed, crucially in reverse order of arrival, feeding each edge into an initially empty instance $\cI^j_1$ of $\cA$.
Then, copies $\cI^j_i$ are created whenever the sum of reduced weights exceed a $(1 + \varepsilon)$ factor of the previous copy.
Once the block $B_j$ has been processed entirely, the subsequent edges of the stream are fed to the instances $\cI^j_1, \cI^j_2, \dots, \cI^j_\ell$ in the natural arrival order.
Note that the algorithm constructs the instances such that $\cI^j_1$ only processes a single edge of the block $B_j$ and $\cI^j_\ell$ processes the entire block.

Intuitively, \Cref{alg:mwmsliding} ensures that, as edges of the block start to fall outside of the sliding window, the oldest remaining instance is still a good approximation of the solution on the entire sliding window, i.e., consecutive runs of $\cA$ are not too different in terms of output.
Moreover, immediately after processing block $B_j$, it holds  that $W'(\cI^j_i) > (1 + \varepsilon)\cdot W'(\cI^j_{i - 1})$ for all $1 < i \leq \ell$.
Therefore, there are only logarithmically many runs of $\cA$ per block.

In the following proofs, we use a notion $S(\cI)$ to denote a substream that is processed by the instance $\cI$ of $\cA$.

\begin{figure}[ht]
 \centering
 \includegraphics[scale=0.82]{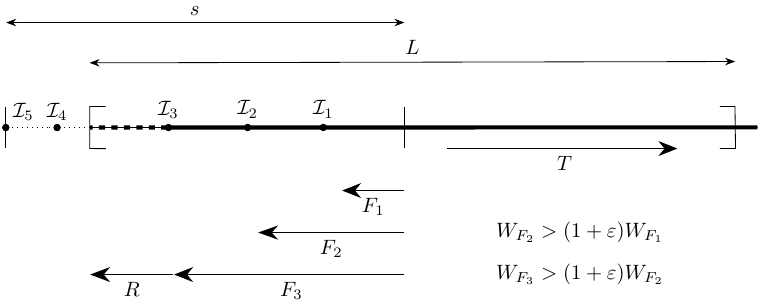}
 \caption{A schematic of a block of the algorithm. The notation here coincides with the notation used in the proof of \cref{thm:TwoApprox}. 
 The window of length $L$ is marked between the square brackets. 
 There are five instances $\cI_1, \dots, \cI_5$ created for the block of length $s$.
 The instance $\cI_{i}$ processed the stream $\overleftharpoon{F_i}T$. 
 Note that $\cI_4$ and $\cI_5$ are already expired, thus they were deleted. 
 The algorithm outputs the result of the instance $\cI_3$ meaning the stream $\overleftharpoon{F_3}T$. 
 The proof of \cref{thm:TwoApprox} will show that the omission of the remainder $\overleftharpoon{R}$ does not compromise the output matching too much.}
 \label{fig:twoapprox_example}
\end{figure}

\begin{theorem}\label{thm:TwoApprox}
There is a deterministic streaming sliding window algorithm for {\sf Maximum-weight Matching} with an approximation factor $2 + \varepsilon$ that uses $O\left(\frac{\sqrt{n \cdot L \cdot \log{1/\varepsilon} \cdot \log{\sigma}}}{\varepsilon}\right)$ words of memory for any $\varepsilon > 0$ and $\sigma = \frac{n}{2}\cdot w_{\text{max}}/w_{\text{min}}$.
\end{theorem}

\begin{proof}
 We will prove that \cref{alg:mwmsliding} satisfies the assertion of the theorem.
 Let $B_j$ be the oldest block of the stream which is still partially contained in the current sliding window $E$, i.e., $E$ contains at least one edge of $B_j$ and no edge of $B_{j - 1}$.
 Let $\cI^j_1,\dots,\cI^j_\ell$ be the instances created during the processing of block $B_j$. 
 Note that each instance $\cI^j_i$ processes the edges of $B_j$ in reverse order.
 Thus, we consider $B_j$ as a stream of edges ordered in reverse to the order in which they arrived.
 For clarity, we denote this as $\overleftharpoon{B_j}$ and similarly for all relevant substreams of $\overleftharpoon{B_j}$. 
 Let $\overleftharpoon{F_i}$ be the substream of $\overleftharpoon{B_j}$ fed into the instance $\cI^j_i$, i.e., $\overleftharpoon{F_i} = S(\cI^j_i) \cap \overleftharpoon{B_j}$.
 Note that $\overleftharpoon{F_1} \subseteq \dots \subseteq \overleftharpoon{F_\ell} = \overleftharpoon{B_j}$.
 
 \subparagraph{Approximation.}
 Let $T$ be the stream of edges that arrive after the stream $\overleftharpoon{B_j}$, i.e., $E \subseteq \overleftharpoon{B_j}T$.
 First suppose that $E = \overleftharpoon{B_j}T$.
 Then, \cref{alg:mwmsliding} returns the matching computed by the oldest instance $\cI^j_\ell$ which has processed the whole stream $\overleftharpoon{F_\ell} T$, i.e., all edges of $E$ (as $\overleftharpoon{F_\ell} = \overleftharpoon{B_j}$).
 Thus, it returns a $(2 + \varepsilon)$-approximation of the optimal solution.
 
 Now, suppose that $E \subset \overleftharpoon{B_j}T$.
 Let $\overleftharpoon{F_i}T \subseteq E \subset \overleftharpoon{F_{i+1}}T$.
 Note that such an $i$ exists as $E \cap \overleftharpoon{B_j} \neq \emptyset$ and $|\overleftharpoon{F_1}| = 1$.
 \cref{alg:mwmsliding} returns a matching computed by the instance $\cI^j_i$ that processed the stream $S(\cI^j_i) = \overleftharpoon{F}T$ for $\overleftharpoon{F} = \overleftharpoon{F_i}$. 
 Let $\overleftharpoon{R}$ be the substream of $\overleftharpoon{F_{i+1}} \setminus \overleftharpoon{F}$ such that $E$ contains exactly the edges of the stream $\overleftharpoon{F}\overleftharpoon{R}T$.
 See Figure~\ref{fig:twoapprox_example}, for an illustration of the substreams processed by various instances $\cI_i$.
 
 Since $E \subset \overleftharpoon{F_{i+1}}T$, it holds by construction of \cref{alg:mwmsliding} that $W'_{\overleftharpoon{F}\overleftharpoon{R}} \leq (1 + \varepsilon)\cdot  W'_{\overleftharpoon{F}}$, where $W'_{\overleftharpoon{F}\overleftharpoon{R}}$ and $W'_{\overleftharpoon{F}}$ are the sums of reduced weights computed by $\cA$ on streams $\overleftharpoon{F}\overleftharpoon{R}$ and $\overleftharpoon{F}$, respectively.
 For any vertex $v$, let $\varDelta(v) := \varphi_{\overleftharpoon{F}\overleftharpoon{R}}(v) - \varphi_{\overleftharpoon{F}}(v)$. 
 Recall that $\varphi$ is an increasing function by the construction of the algorithm, thus  $\varDelta(v) \ge 0$.
 Then, by the proportionality between the sum of reduced weights and the sum of potentials ($\sum_e w'(e) = 2\sum_v \varphi(v)$, see \cref{thm:Streaming2apx}), we have the following upper bound:
 \[
 \sum_{v \in V} \varDelta(v) = \sum_{v \in V} \varphi_{\overleftharpoon{F}\overleftharpoon{R}}(v) - \varphi_{\overleftharpoon{F}}(v) \le \sum_{v \in V} (1 + \varepsilon)\cdot \varphi_{\overleftharpoon{F}}(v) - \varphi_{\overleftharpoon{F}}(v) = \varepsilon \cdot \sum_{v \in V}  \varphi_{\overleftharpoon{F}}(v) \ . 
 \]
 We now claim that if we assign, for every $v \in V$, a weight $c(v) := (1+\varepsilon)\cdot \bigl(\varphi_{\overleftharpoon{F}T}(v) + \varDelta(v)\bigr)$, then we have a valid (weighted) vertex cover in the graph consisting of all edges in $\overleftharpoon{F}\overleftharpoon{R}T$, i.e., for each edge $e = \{u,v\} \in \overleftharpoon{F}\overleftharpoon{R}T$, it holds that $c(e) := c(u) + c(v)  \ge w(e)$.
 Consider two cases.
 If $e \in \overleftharpoon{F}T$, then we have $c(e) \ge (1+\varepsilon)\cdot \bigl(\varphi_{\overleftharpoon{F}T}(v) + \varphi_{\overleftharpoon{F}T}(u)\bigr) \ge w(e)$.
 Otherwise, $e \in \overleftharpoon{R}$ and we have
 \begin{align*}
 c(e) &= (1+ \varepsilon)\cdot \bigl(\varphi_{\overleftharpoon{F}T}(v) +  \varphi_{\overleftharpoon{F}T}(u) + \varphi_{\overleftharpoon{F}\overleftharpoon{R}}(v) - \varphi_{\overleftharpoon{F}}(v) + \varphi_{\overleftharpoon{F}\overleftharpoon{R}}(u) - \varphi_{\overleftharpoon{F}}(u)\bigr)   \\
 &\ge (1 + \varepsilon)\cdot \bigl(\varphi_{\overleftharpoon{F}\overleftharpoon{R}}(v) + \varphi_{\overleftharpoon{F}\overleftharpoon{R}}(u) \bigr)  \tag{$\overleftharpoon{F}$  is a substream of  $\overleftharpoon{F}T$}\\ 
 &\ge  w(e) \ . \tag{by \cref{lem:WeightBound}}
 \end{align*}
Thus, we get a valid vertex cover as required.
Now, we can use this to show that the returned matching $\hat{M}(\overleftharpoon{F}T)$ computed by $\cA$ on the stream $\overleftharpoon{F}T$ is a $(2 + \varepsilon)$-approximation of the maximum weighted matching $M^*$ of the sliding window $E$:
 \begin{align*}
 w(M^*) & = \sum_{e \in M^*} w(e) \le \sum_{v \in V}c(v) \tag{each vertex is incident to at most one edge in $M^*$} \\
  &= (1+\varepsilon)  \sum_{v \in V} \bigl(\varphi_{\overleftharpoon{F}T}(v) + \varDelta(v)\bigr) \\
       &\le (1+\varepsilon) \sum_{v \in V} \bigl(\varphi_{\overleftharpoon{F}T}(v) + \varepsilon \varphi_{\overleftharpoon{F}}(v)\bigr) \\
       & \le (1+ \varepsilon)^2 \sum_{v \in V} \varphi_{\overleftharpoon{F}T}(v) \tag{since $\varphi$ is monotonic}\\
       & \le (1+ 3\varepsilon) \sum_{v \in V} \varphi_{\overleftharpoon{F}T}(v) \tag{since $\varepsilon^2 < \varepsilon$ for $0 < \varepsilon < 1$} \\
       & \le 2 (1 + 3 \varepsilon)(1 + 4\varepsilon) \cdot \w{\hat{M}(\overleftharpoon{F}T)} \tag{by \cref{thm:Streaming2apx}}\\
       & \le (2 + 38 \varepsilon) \cdot  \w{\hat{M}(\overleftharpoon{F}T)} \ .
 \end{align*}
 \subparagraph{Space.}
 The sliding window $E$ can be covered by $O\left(\frac{L}{s}\right)$ many blocks, as $s$ is the block size.
 First, we bound the number of instances $\ell$ created for a block $\overleftharpoon{B_j}$.
 Recall that edges from the block $B_j$ processed by the instance $\cI^j_i$ are the edges in $\overleftharpoon{F_i}$, and $\overleftharpoon{F_1} \subseteq \dots \subseteq \overleftharpoon{F_\ell} = \overleftharpoon{B_j}$. 
 Furthermore, $\overleftharpoon{F_1} = \{e'\}, W'_{\overleftharpoon{F_1}} = w(e') \geq w_\text{min}$, and $W'_{\overleftharpoon{F_{i + 1}}} > (1 + \varepsilon) \cdot W'_{\overleftharpoon{F_i}}$ for all $i < \ell$.
 Thus,
 \[
   (1 + \varepsilon)^{\ell - 1} \cdot w_\text{min} \leq  (1 + \varepsilon)^{\ell - 1} \cdot W'_{\overleftharpoon{F_1}} < W'_{\overleftharpoon{F_\ell}} \leq \w{\hat{M}(\overleftharpoon{F_\ell})} \leq \frac{n}{2}\cdot w_\text{max}.
 \]
 By rearranging, we get $\ell = O(\log_{1 + \varepsilon} \sigma) = O\left(\frac{1}{\varepsilon} \cdot \log \sigma \right)$.
 By \cref{obs:StreamingNumEdges}, each instance of $\cA$ stores $O\left(\frac{n \log(1/\varepsilon)}{\varepsilon}\right)$ edges.
 Thus, at any moment, all existing instances of $\cA$ store $O\left(\frac{L}{s} \cdot \frac{\log \sigma}{\varepsilon} \cdot \frac{n \log(1/\varepsilon)}{\varepsilon}\right)$ many edges.
 
 Note that we additionally need to store the edges of at most one block (stored in {\Stack}), i.e., at most $s$ edges.
 Overall, we need to store at most  $O\left(\frac{L}{s} \cdot \frac{\log  \sigma}{\varepsilon} \cdot \frac{n \log(1/\varepsilon)}{\varepsilon} + s\right)$
edges.
Setting $s$ to $\left\lfloor  \frac{\sqrt{  n \cdot L \cdot \log{1/\varepsilon} \cdot \log{\sigma}}}{\varepsilon} \right\rfloor$ gives us the final space bound in words of memory.
 \end{proof}
 
\begin{remark*}
 Assuming that $\varepsilon$ is constant and that $\sigma$ is polynomial in $n$, we obtain an algorithm that uses $\tilde{O}(\sqrt{nL})$ space. This is $o(n^2)$ space as long as $L = \tilde{o}(n^3)$. If, additionally, the input graph of each window is simple, we have that $L = O(n^2)$ (a simple graph always has $O(n^2)$ edges) and a space bound of $O\left( n \sqrt{n} \cdot \frac{   \sqrt{  \log{1/\varepsilon} \cdot \log{ \sigma} }  }{ \varepsilon}  \right)$, which simplifies to $\Tilde{O}\left(n \sqrt{n}\right)$.
\end{remark*}

We can easily adapt the algorithm to the (unweighted) \textsf{MM} problem. 
More specifically, the Paz-Schwartzman algorithm becomes the \textsc{Greedy} matching algorithm, while the sum of reduced weights simply becomes the size of the \textsc{Greedy} matching obtained. 
While the approximation factor remains $2 +  \varepsilon$, the matchings of the instances now store $O(n)$ edges instead of $O\left(\frac{n \log(1/\varepsilon)}{\varepsilon}\right)$. Also, $\sigma = \frac{n}{2}$. 
Then, by setting $s$ to $\left\lfloor  \sqrt{ \frac{ n \cdot L \cdot  \log{n}}{\varepsilon}} \right\rfloor$, we obtain a better memory bound for the algorithm.
 This adaptation yields the following result:
 
 \begin{theorem}
 There is a deterministic streaming sliding window algorithm for \textsf{MM}  with an approximation factor of $(2 + \varepsilon)$ that uses $O\left(\sqrt{ \frac{ n \cdot L \cdot \log{n} } {\varepsilon}  }  \right)$ words of memory for any $\varepsilon > 0$.
 \end{theorem}

\section{\texorpdfstring{$(3 + \varepsilon)$}{(3 + epsilon)}-approximation Sliding Window Algorithm}
\label{sec:threeapprox}

In this section, we give a $(3 + \varepsilon)$-approximation semi-streaming sliding window algorithm by applying the smooth histogram technique~\cite{Braverman2007SmoothHF} in a similar manner as Biabani et al.~\cite{Biabani21_WeightedMatching}.
We start with our definition of a {\em refined lookahead algorithm} which we use to describe a sliding window algorithm. 
Then, we show that {\mwm} is a refined lookahead algorithm; thus, obtaining the sliding window algorithm for \textsf{MWM}.




\begin{restatable}[$(f, \alpha_1, \alpha_2, \beta)$-refined lookahead algorithm]{definition}{defrefinedlookahead} \label{def:refined-lookahead}
Let $\beta \in (0, 1)$, $\alpha_1, \alpha_2 \geq 1$ and, for a ground set $X$, let ${f: 2^X \to \R^+}$ be a non-decreasing function. 
We say a streaming algorithm $\alg$ with two outputs $O_1, O_2$ is a \emph{$(f, \alpha_1, \alpha_2, \beta)$-refined lookahead} algorithm if the following holds for any stream $S$ of items of the set $X$:
\begin{enumerate}
 \item $O_1(S) \leq f(S) \leq \alpha_1 \cdot O_1(S)$, i.e., the first output is an $\alpha_1$-approximation of $f$. \label{lookahead:prop1}
 \item For any partitioning of $S$ into three disjoint sub-streams $A$, $B$, and $C$ with $O_1(B) \geq (1 - \beta) \cdot O_1(AB)$, we have $O_2(BC) \leq f(ABC) \leq \alpha_2 \cdot O_2(BC)$, i.e., if the first output on the substream $AB$ is similar to the first output on the substream $B$ then the second output on the substream $BC$ is an $\alpha_2$-approximation of $f$ on the whole stream $S = ABC$. \label{lookahead:prop2}
\end{enumerate}

\end{restatable}

Observe that if $O_1 = O_2$ and $\alpha_1 = \alpha_2 = \alpha$ then we retrieve the standard definition of a $(f,\alpha, \beta)$-lookahead algorithm as given by Biabani et al.\ (see \cref{def:lookahead}).
Our refined lookahead algorithm is also similarly turned into a sliding window algorithm.
In essence, the algorithm simulates runs of a traditional streaming algorithm on suffixes of the current sliding window.
It maintains runs on suffixes such that the value of $O_1$ of any two consecutive runs are not too different, while the value of $O_1$ of any non-consecutive runs are sufficiently different so as to ensure that at most a logarithmic number of runs are required at any point of time.
The second output $O_2$ is a solution which, given the smoothness assumptions of the runs, is always guaranteed to be an $\alpha_2$-approximation of the next oldest run. 
Details of the algorithm and the proof of the following theorem are provided in \Cref{app:reflook}.

%

\begin{restatable}{theorem}{thmsliding} \label{thm:sliding}
  Let $0 < \beta < 1$ and $\alpha_1,\alpha_2 \geq 1$, $S$ be a stream of items from a set $X$, and $f: 2^X \to \R^+$ be a non-decreasing function.
  Suppose there exists a $(f, \alpha_1, \alpha_2, \beta)$-refined lookahead algorithm that uses at most $s$ words of memory. 
  Then, there is a sliding window algorithm that maintains an $\alpha_2$-approximation of $f$ using $O\bigl(\frac{1}{\beta} \cdot s  \log (\alpha_1\sigma)\bigr)$ words of memory for $\sigma = f(S) / f_{\min}$ where $ f_{\min} = \min \{f(e) : e \in S\}$.
 \end{restatable}

We will now apply \Cref{def:refined-lookahead} to algorithm {\mwm}.
To this end, we consider the first output $O_1$ as the sum of reduced weights $W'_S$, the second output $O_2$ as the weight of the returned matching $\w{\hat{M}(S)}$, and function $f$ as the weight of a maximum-weight matching $\textsf{MWM}(S)$. 
In fact, we prove in \Cref{thm:LookaheadMatching} that this indeed yields a $\bigl(\textsf{MWM}, (2+2\varepsilon), (3 + 20\varepsilon), \beta\bigr)$-refined lookahead algorithm.
Hence, the algorithm given by Theorem~\ref{thm:sliding} with {\mwm} is a $(3+\varepsilon)$-approximation semi-streaming sliding window algorithms for \textsf{MWM}.

\begin{theorem}
  \label{thm:LookaheadMatching}
   Let $0 < \varepsilon \leq \frac{1}{10}$ and $0 < \beta \leq \frac{\varepsilon}{9}$.
   The algorithm {\mwm} is a $\bigl(\textsf{MWM}, {(2 + 2\varepsilon)}, \\ {(3 + 20\varepsilon)}, \beta\bigr)$-refined lookahead algorithm.
  \end{theorem}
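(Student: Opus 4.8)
The plan is to verify the two properties of Definition~\ref{def:refined-lookahead} for {\mwm} with $\alg_1 = W'$, $\alg_2 = w(\hat M)$, and $f = \textsf{MWM}$. Property~\ref{lookahead:prop1} is essentially already in hand: the inequality $W'_S \le \w{M^*(S)}$ is the first line of \Cref{thm:Streaming2apx}, and the upper bound $\textsf{MWM}(S) \le 2(1+\varepsilon)\cdot W'_S$ follows from summing the per-edge bound of \Cref{lem:WeightBound} over the edges of an optimal matching $M^*(S)$: each such edge $e=\{u,v\}$ has $w(e)\le (1+\varepsilon)(\varphi_S(u)+\varphi_S(v))$, the vertices of $M^*(S)$ are distinct so $\sum_{e\in M^*(S)}(\varphi_S(u)+\varphi_S(v)) \le \sum_{v\in V}\varphi_S(v) = 2W'_S$, giving $\textsf{MWM}(S)\le 2(1+\varepsilon)W'_S$. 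So $\alpha_1 = 2+2\varepsilon$ works.

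The substance is Property~\ref{lookahead:prop2}. Fix a partition $S = ABC$ with $W'_B \ge (1-\beta)W'_{AB}$. The lower bound $\w{\hat M(BC)}\le \textsf{MWM}(ABC)$ is immediate since $\hat M(BC)$ is a matching in a subgraph of $ABC$ and $\textsf{MWM}$ is non-decreasing. For the upper bound I would first establish a ``reduced-weight'' analogue of Biabani et al.'s Inequality~\eqref{eqn:392}, namely something of the form
\begin{equation*}
\textsf{MWM}(ABC) \;\le\; 2(1+\varepsilon)\bigl(W'_{AB} + W'_{BC}\bigr) \;-\; c\cdot W'_B
\end{equation*}
for an appropriate constant $c$ (morally $c = \tfrac{1}{2(1+\varepsilon)}$, or perhaps a cleaner $c=1$ depending on how tightly one argues). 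The proof of such an inequality is a charging argument over the edges of $M^*(ABC)$: classify each optimal edge by whether both endpoints, one endpoint, or neither endpoint has its potential ``touched'' by edges of $A$ (equivalently $C$) when the algorithm is run on $AB$ (resp. $BC$); use \Cref{lem:WeightBound} applied to the run on $AB$ for some edges and to the run on $BC$ for others; and observe that potential accumulated on vertices by edges of $B$ alone is common to both runs, which is precisely where the $-c\cdot W'_B$ savings comes from (the key structural fact being that reduced weights of $B$-edges are unchanged whether one has seen $A$ first or not — this is the monotonicity/stability of reduced weights noted in the preliminaries, and is exactly why working with $W'$ rather than $w(\hat M)$ is advantageous here).

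Once that inequality is in place, I would finish as Biabani et al. do but with the better smoothness handle. The hypothesis $W'_B \ge (1-\beta)W'_{AB}$ lets me replace $W'_B$ by $(1-\beta)W'_{AB}$ in the subtracted term, and separately I need to bound $W'_{AB}$ in terms of $W'_{BC}$. Here I would use that $W'$ is non-decreasing ($W'_B \le W'_{BC}$) together with the smoothness hypothesis to get $W'_{AB} \le \tfrac{1}{1-\beta}W'_B \le \tfrac{1}{1-\beta}W'_{BC}$; substituting everything back yields $\textsf{MWM}(ABC) \le \gamma(\varepsilon,\beta)\cdot W'_{BC}$ for some $\gamma \to 3$ as $\varepsilon,\beta\to 0$. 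The final step converts $W'_{BC}$ into $\w{\hat M(BC)}$ via the second line of \Cref{thm:Streaming2apx}, $\w{\hat M(BC)} \ge \tfrac{1}{1+4\varepsilon}W'_{BC}$, which costs another $(1+4\varepsilon)$ factor; choosing $\beta \le \varepsilon/9$ and $\varepsilon\le 1/10$ and bookkeeping the constants should land the product at $3 + 20\varepsilon$. The main obstacle is getting the reduced-weight inequality with a large enough subtracted coefficient $c$: the $3.5$ versus $3$ improvement lives entirely in how much of $W'_B$ one can prove is genuinely shared between the $AB$-run and the $BC$-run, so the charging argument must be careful about which optimal edges can be charged to $B$-potential in \emph{both} runs simultaneously rather than double-counting.
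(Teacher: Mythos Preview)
Your overall plan matches the paper's: verify Property~\ref{lookahead:prop1} via \Cref{thm:Streaming2apx} and \Cref{lem:WeightBound}, then establish the key reduced-weight inequality
\[
\textsf{MWM}(ABC) \le 2(1+\varepsilon)\bigl(W'_{AB}+W'_{BC}\bigr) - W'_B
\]
(with $c=1$, the ``cleaner'' value you conjectured), and finish exactly as you describe --- replace $W'_{AB}$ by $\tfrac{1}{1-\beta}W'_B$, use $W'_B\le W'_{BC}$, then convert via $\w{\hat M(BC)}\ge\tfrac{1}{1+4\varepsilon}W'_{BC}$. The arithmetic the paper runs lands at $(3+5\varepsilon)(1+4\varepsilon)\le 3+20\varepsilon$.

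However, your charging sketch for the key inequality rests on a false premise. You write that ``reduced weights of $B$-edges are unchanged whether one has seen $A$ first or not'' and cite the stability in the preliminaries --- but that stability goes the other way: $w'_A(e)=w'_{AB}(e)$ for $e\in A$, i.e., reduced weights are stable under \emph{appending} a suffix, not under \emph{prepending} a prefix. In general $w'_{AB}(e)\neq w'_B(e)$ for $e\in B$: the $A$-edges raise potentials before $B$ begins, typically shrinking or zeroing out the reduced weights of $B$-edges in the $AB$-run. So the savings mechanism you propose does not work as stated, and the ``potential accumulated on vertices by edges of $B$ alone is common to both runs'' claim is simply not true.

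The paper's route to the key inequality is structurally different. It splits $M^*=M^*_{AB}\cup M^*_C$ according to whether the optimal edge lies in $AB$ or in $C$, and introduces the \emph{critical subgraph} $H=(V_H,E_H)$, where $E_H$ is the set of $B$-edges adjacent to \emph{two} edges of $M^*_C$. Two lemmas are then proved:
\[
\w{M^*_{AB}}\le 2(1+\varepsilon)W'_{AB}-\sum_{e\in E_H}w'_B(e),
\qquad
\w{M^*_C}\le 2(1+\varepsilon)W'_{BC}-(1+\varepsilon)\sum_{e\in B\setminus E_H} w'_B(e).
\]
The first savings arises because vertices of $V_H$ are not endpoints of $M^*_{AB}$, so their $AB$-potentials can be subtracted; a separate lemma (\Cref{lem:PotentialBound}) then shows $(1+\varepsilon)\sum_{v\in V_H}\varphi_{AB}(v)\ge\sum_{e\in E_H}w'_B(e)$ --- this is the only bridge between the $AB$-run and the quantities $w'_B$, and it is a one-sided inequality, not the equality you assumed. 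The second savings is a direct charging in the $BC$-run: edges of $B\setminus E_H$ are adjacent to at most one edge of $M^*_C$, so they are counted once instead of twice. Adding the two bounds gives the subtracted term $-W'_B$ since the two sums partition $B$.
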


To prove \cref{thm:LookaheadMatching}, we follow the approach of Biabani et al.~\cite{Biabani21_WeightedMatching}.
Let an input stream $S$ be partitioned into three substreams $ABC$.
They split the maximum matching of the stream $M^* = M^*(ABC)$ into two parts $M^*_{AB}$ and $M^*_{C}$ where $M^*_{AB} := M^* \cap AB$ is the restriction of $M^*$ to the edges in $AB$, analogously for the substream $C$. 
Biabani et al.\ then bound the weights of these two parts separately.
To this end, they use the notion of a \emph{critical subgraph}.

\begin{definition}[Critical Subgraph~\cite{Biabani21_WeightedMatching}]\label{def:critsub}
Consider a graph $G$ specified by a stream $S$ of edges.
Let $A, B, C$ be disjoint substreams of $S$ such that $S = ABC$. 
Then, the \emph{critical subgraph of $G$} with respect to the maximum matching $M^*(ABC)$ and the substreams $A, B, C$ is the subgraph $H = (V_H, E_H)$ such that
\begin{itemize}
 \item $E_H := \{e \in B \mid e \text{ is adjacent to two edges in  } M^*_{C}\}$. 
 \item $V_H := V(E_H)$, i.e., $V_H$ is the set of endpoints of the edges in $E_H$. 
\end{itemize}
\end{definition}

Biabani et al.\ use the critical subgraph to bound the weights of $M^*_{AB}$ and $M^*_{C}$ in terms of the weight of the matching returned by the algorithm $\w{\hat{M}}$ (Lemmas 13 and 14 in their work~\cite{Biabani21_WeightedMatching}).
 In our analysis, in particular, in  \Cref{lem:ABOptBound,lem:COptBound}, we use the same ideas to bound the weights of $M^*_{AB}$ and $M^*_{C}$ in terms of sums of reduced weights computed by the algorithm instead.

Before stating and proving \Cref{lem:ABOptBound,lem:COptBound}, we present the following auxiliary lemma already proved by Biabani et al.\ in the exact formulation as we need it.
We highlight that their proof holds for any run of {\mwm} on an arbitrary stream.

\begin{lemma}[Biabani et al.~\cite{Biabani21_WeightedMatching}, Lemma 15]
  \label{lem:PotentialBound}
  For any stream $AB$,
   \[
  (1 + \varepsilon)\cdot \sum_{v \in V_H} \varphi_{AB}(v) \geq \sum_{e \in E_H} w'_B(e)  \ .
   \]
\end{lemma}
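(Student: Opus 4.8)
The plan is to derive the inequality from Proposition~\ref{lem:WeightBound} --- applied to the run of {\mwm} on the stream $AB$ --- together with the elementary facts about reduced weights recorded in \Cref{sec:Preliminaries}: for every $e \in B$ we have $0 \le w'_B(e) \le w(e)$, and $\varphi_B(v)$ equals the sum of the reduced weights (in the run on $B$) of the pushed edges incident to $v$. The starting point is a clean per-edge estimate: for every $e = \{u,v\} \in E_H \subseteq B$, Proposition~\ref{lem:WeightBound} gives $w(e) \le (1+\varepsilon)\bigl(\varphi_{AB}(u) + \varphi_{AB}(v)\bigr)$, and since $w'_B(e) \le w(e)$ this yields $w'_B(e) \le (1+\varepsilon)\bigl(\varphi_{AB}(u) + \varphi_{AB}(v)\bigr)$, where crucially both endpoints lie in $V_H$. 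If $E_H$ were a matching, summing this over $e \in E_H$ would immediately give the claim, since each vertex of $V_H$ would be charged at most once. So the entire difficulty lies in the fact that $E_H$ need not be a matching: several edges of $E_H$ may share a vertex, and then naively summing the per-edge estimate over-counts potentials by up to the maximum degree.

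To get around this I would avoid bounding $w'_B(e)$ edge by edge and instead exploit the telescoping structure of the reduced weights along the edges incident to a common vertex during the run on $B$: if $e_1, \dots, e_t$ are the pushed edges incident to a vertex $v$, in push order, then each partial sum $\sum_{i \le j} w'_B(e_i)$ is at most $w(e_j)$, and in particular $\sum_{i \le t} w'_B(e_i) = \varphi_B(v) \le w(e_t)$. Using this, $\sum_{e \in E_H} w'_B(e)$ should be boundable by $\sum_{e} w(e)$ over a carefully chosen sub-collection of $E_H$ --- intuitively, at each vertex one only needs to ``keep'' one dominating $E_H$-edge, since the contribution of the earlier, lighter ones has already been absorbed into the potentials --- which forms a matching (or a union of $O(1)$ matchings) all of whose edges have both endpoints in $V_H$. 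Applying Proposition~\ref{lem:WeightBound} to each edge of this sub-collection and summing then gives $(1+\varepsilon) \sum_{v \in V_H} \varphi_{AB}(v) \ge \sum_{e \in E_H} w'_B(e)$.

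The main obstacle is exactly the construction of this sub-collection and the verification that its total weight dominates $\sum_{e \in E_H} w'_B(e)$ while no vertex of $V_H$ is charged more than a bounded number of times: the run on $B$ may interleave $E_H$-edges with edges outside $E_H$, and deleting earlier edges from a stream does not in general preserve the reduced weights of later edges, so the book-keeping must track the actual evolution of the potentials rather than reason about sub-streams in isolation. Pinning the selection down so that the leading constant comes out as exactly $(1+\varepsilon)$, rather than some larger constant, is the delicate point. Since the excerpt notes that Biabani et al.~\cite{Biabani21_WeightedMatching} establish this statement (their Lemma~15) in precisely the form needed here and for an arbitrary run of {\mwm}, the clean fallback --- and arguably the right route --- is to invoke their proof verbatim.
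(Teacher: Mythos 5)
The paper gives no proof of this lemma at all: it is imported verbatim from Biabani et al.\ (their Lemma~15), with only the remark that their argument applies to any run of {\mwm} on an arbitrary stream, so your fallback of invoking that proof is exactly the paper's approach. Your attempted reconstruction correctly isolates the genuine difficulty (that $E_H$ need not be a matching, so summing the per-edge bound from \cref{lem:WeightBound} over-counts the potentials of $V_H$) and correctly points to the telescoping of reduced weights as the relevant tool, but the sketched fix via a dominating sub-collection is not carried out and --- as you acknowledge --- is where the entire content of the lemma lies, so on its own the sketch would not stand without the citation.
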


For the statement of the next auxiliary lemma, we need the following notion.
Let $S$ be a stream of edges.
For an edge $e \in S$, we define the set $P_S(e)$ as the set of edges incident to $e$ (including $e$) arriving no later than $e$, i.e, $ P_S(e) = \{e' \in S \mid e' \cap e \neq \emptyset, t_{e'} \leq t_e \}$, where, for any edge $f$, $t_{f}$ is the arrival time of edge $f$.
Biabani et al.~\cite{Biabani21_WeightedMatching} showed that the weight of any edge $e$ can be bounded by the sum of the reduced weights of the edges in $P_S(e)$ (up to a $(1 + \varepsilon)$ factor).

\begin{lemma}[Biabani et al.~\cite{Biabani21_WeightedMatching}, Lemma 5]
 \label{lem:ReducedWeightsBound}
 For each edge $e \in S$,
 \[
  w(e) \leq (1 + \varepsilon) \sum_{e' \in P_S(e)} w'_S(e).  
 \]

\end{lemma}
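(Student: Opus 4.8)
The plan is to reason locally: fix an edge $e = \{u, v\} \in S$ and inspect the state of {\mwm} at the instant it begins to process $e$, writing $\varphi_u$ and $\varphi_v$ for the potentials of $u$ and $v$ at that moment (i.e., after $e$'s predecessor in $S$ has been fully handled by the streaming phase of \Cref{alg:Streaming}). The first step is the local estimate
\[
 w(e) \;\le\; (1+\varepsilon)\cdot\bigl(\varphi_u + \varphi_v + w'_S(e)\bigr) ,
\]
which I would obtain by a two-case split mirroring the branch of \Cref{alg:Streaming} taken on $e$. If the test $w(e) < (1+\varepsilon)(\varphi_u + \varphi_v)$ succeeds, then $e$ is not pushed, $w'_S(e) = 0$, and the estimate is immediate. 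Otherwise $e$ is pushed with $w'_S(e) = w(e) - (\varphi_u + \varphi_v) \ge 0$; rearranging gives $w(e) = \varphi_u + \varphi_v + w'_S(e)$, and the estimate follows since $\varepsilon \ge 0$ and all three summands are nonnegative.

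The second step is to recognise $\varphi_u + \varphi_v + w'_S(e)$ as a sum of reduced weights over $P_S(e)$. Potentials are initialised to $0$ and are only ever incremented, by the reduced weight of the edge being pushed, so at every point in time the potential of a vertex equals the sum of the reduced weights of its already-processed incident edges; moreover, as noted in \Cref{sec:Preliminaries}, a reduced weight is fixed once assigned and is untouched by later arrivals (and the {\sf Stack}-trimming steps change neither potentials nor reduced weights). Consequently $\varphi_u = \sum_{e' \ni u,\, t_{e'} < t_e} w'_S(e')$ and $\varphi_v = \sum_{e' \ni v,\, t_{e'} < t_e} w'_S(e')$, where the sums already refer to the final function $w'_S$. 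Since $G$ is simple, $e$ is the only edge incident to both $u$ and $v$, so these two index sets are disjoint and, together with $e$ itself, form exactly $P_S(e) = \{e' \in S : e' \cap e \neq \emptyset,\ t_{e'} \le t_e\}$. Hence $\varphi_u + \varphi_v + w'_S(e) = \sum_{e' \in P_S(e)} w'_S(e')$, and combining this with the local estimate of the first step yields the lemma.

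I do not expect a genuine obstacle here; the work is bookkeeping. The one point to be careful about is consistency of time: the potentials must be read at the moment immediately before $e$ is processed (not at the end of the stream), and one must invoke the preliminaries' observation that reduced weights, once set, never change, so that the snapshot identity for $\varphi_u$ and $\varphi_v$ can legitimately be written in terms of $w'_S$. (In a multigraph a parallel copy of $e$ preceding $e$ would be counted in both $\varphi_u$ and $\varphi_v$ while contributing only once to $\sum_{e' \in P_S(e)} w'_S(e')$, which would break the identity; the simple-graph assumption, standard for matching, rules this out.)
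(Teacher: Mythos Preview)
Your argument is correct. The paper does not give its own proof of this lemma; it is imported verbatim from Biabani et al.\ (their Lemma~5) and merely cited here. Your two-step argument---first bounding $w(e)$ by $(1+\varepsilon)(\varphi_u+\varphi_v+w'_S(e))$ via the branch taken in \Cref{alg:Streaming}, then identifying $\varphi_u+\varphi_v+w'_S(e)$ with $\sum_{e'\in P_S(e)} w'_S(e')$ using that potentials are running sums of reduced weights of incident edges---is exactly the natural proof and matches how the original reference establishes it. Your care about reading the potentials at the right time and about the simple-graph assumption is appropriate; nothing further is needed.
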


With that, we can finally prove our analogous lemmas of Biabani et al.'s Lemmas 13 and 14~\cite{Biabani21_WeightedMatching} which bound $\w{M^*_{AB}}$ and $\w{M^*_{C}}$, respectively.

\begin{restatable}[Analogue of Lemma 13,~\cite{Biabani21_WeightedMatching}]{lemma}{lemABOptBound}
\label{lem:ABOptBound}
 For any stream $ABC$,
 \[
\w{M^*_{AB}} \leq 2(1 + \varepsilon)\cdot W'_{AB} - \sum_{e \in E_H}w'_{B}(e).  
 \]
\end{restatable}

\begin{proof}
 By definition, we have $\w{M^*_{AB}} =  \sum_{e \in M^*_{AB}} w(e)$.  
 Let $e = \{u,v\} \in M^*_{AB}$. 
 Note that the vertices $u$ and $v$ are not in $V_H$. 
 Thus, we can bound the sum as follows.
 
 \begin{align*}
  \w{M^*_{AB}} & \le (1 + \varepsilon) \sum_{v \in V \setminus V_H} \varphi_{AB}(v) & \text{by \cref{lem:WeightBound}}\\ 
  & = (1 + \varepsilon) \left(\sum_{v \in V} \varphi_{AB}(v) - \sum_{v \in V_H} \varphi_{AB}(v)\right)
 \end{align*}
 
 \noindent By \cref{thm:Streaming2apx} and by \cref{lem:PotentialBound}, we have
 \[
   \sum_{v \in V} \varphi_{AB}(v) = 2 W'_{AB}  \text{~~~~~and~~~~~}
  (1 + \varepsilon) \sum_{v \in V_H} \varphi_{AB}(v) \geq \sum_{e \in E_H} w'_B(e).
 \]

\noindent Thus, we can conclude that
 
 \[
 \w{M^*_{AB}} \le 2(1+ \varepsilon) \cdot W'_{AB} - \sum_{e \in E_H} w'_B(e). \qedhere
 \]
 
\end{proof}

\begin{restatable}[Analogue of Lemma 14, \cite{Biabani21_WeightedMatching}]{lemma}{lemCOptBound}
 \label{lem:COptBound}
 For any stream $ABC$,
 \[
\w{M^*_{C}} \leq 2(1 + \varepsilon) \cdot W'_{BC} - (1 + \varepsilon)  \sum_{e \in B \setminus E_H} w'_{B}(e).  
 \]

\end{restatable}

\begin{proof}
  First, when considering a run of the algorithm on $BC$, by \cref{lem:ReducedWeightsBound}, we obtain
  \begin{align*}
   \w{M^*_{C}} & = \sum_{e \in M^*_C} w(e) \leq (1 + \varepsilon) \sum_{e \in M^*_{C}} \sum_{e' \in P(e)} w'_{BC}(e') \ .
  \end{align*}
  Observe that any edge $e \in BC$ is incident to at most two edges of $M^*_C$, and the edges of $B \setminus E_H$ are incident to at most one edge of $M^*_C$. Hence, we can rewrite the previous double sum as follows: 
   
   \begin{align*}
    \sum_{e \in M^*_{C}} \sum_{e' \in P(e)} w'_{BC}(e')
    & \leq 2 \cdot \sum_{e \in BC} w'_{BC}(e) - \sum_{e \in B \setminus E_H} w'_{BC}(e) \\
    & = 2 \cdot W'_{BC} - \sum_{e \in B \setminus E_H} w'_{BC}(e) \ ,
   \end{align*}
  which implies the result.
  \end{proof}

Now, we are ready to prove \cref{thm:LookaheadMatching}, i.e., {\mwm} is a $\bigl(\textsf{MWM}, (2 + 2\varepsilon), (3 + 20\varepsilon), \beta\bigr)$-refined lookahead algorithm for suitable parameters $\varepsilon$ and $\beta$.
\begin{proof}[Proof of \cref{thm:LookaheadMatching}]
 We recall that we consider a version of {\mwm} such that the first output is the sum of reduced weight $W'$ and the second output is the weight of the computed matching $\w{\hat{M}}$.
 First, by \cref{thm:Streaming2apx}, we get that for any stream $S$ it holds that $W'_S \leq \w{M^*(S)} \leq 2(1 + \varepsilon) \cdot W'_S$.
 Thus, it remains to prove that for any stream $ABC$, given that ${W'_{B} \geq (1 - \beta) \cdot W'_{AB}}$, the maximum matching ${M^* = M^*(ABC)}$ is such that ${\w{M^*} \leq (3 + 20\varepsilon) \cdot \w{\hat{M}(BC)}}$.
 \begin{align*}
  \w{M^*} &\leq 2(1 + \varepsilon)\cdot W'_{AB} + 2(1 + \varepsilon) \cdot W'_{BC} - W'_B & \text{by \cref{lem:ABOptBound,lem:COptBound}} \\
  & \leq \frac{2(1 + \varepsilon)}{1 - \beta}\cdot W'_{B} + 2(1 + \varepsilon) \cdot W'_{BC} - W'_{B} & \text{by ${W'_{B} \geq (1 - \beta) \cdot W'_{AB}}$} \\
  & \leq (1 + 3\varepsilon) \cdot W'_{B} + 2(1 + \varepsilon) \cdot W'_{BC} & \text{since }\beta \leq \frac{\varepsilon}{9} \\
  & \leq (3 + 5\varepsilon)\cdot W'_{BC} & \text{by $W'$ being non-decreasing} \\
  & \leq (3 + 5\varepsilon)(1 + 4\varepsilon)\cdot \w{\hat{M}(BC)}  & \text{by \cref{thm:Streaming2apx}} \\
  & \leq (3 + 20\varepsilon)\cdot \w{\hat{M}(BC)} & \text{since } \varepsilon \leq \frac{1}{10} & \qedhere
 \end{align*} 

\end{proof}

\noindent \cref{thm:sliding,thm:LookaheadMatching} together then imply our main result.

\newcounter{thmsaved} 
\setcounter{thmsaved}{\value{theorem}}

\setcounter{theorem}{\value{counterMainThm}}
\addtocounter{theorem}{-1}

\begin{theorem}
 There is a deterministic streaming sliding window algorithm for \textsf{Maximum-weight Matching} with an approximation factor $3+\varepsilon$ that uses $O\left(\frac{\log(1/\varepsilon)}{\varepsilon^2} \cdot n \log \sigma\right)$ words of memory, for any $0 < \varepsilon \le 0.1$ and $\sigma = \frac{n}{2}\cdot w_{\text{max}}/w_{\text{min}}$.
\end{theorem}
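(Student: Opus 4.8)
The plan is to assemble the two main theorems after a routine rescaling of the accuracy parameter. Fix the target accuracy $0 < \varepsilon \le 0.1$ and set $\varepsilon' := \varepsilon/20$ and $\beta := \varepsilon'/9 = \varepsilon/180$; then $0 < \varepsilon' \le 1/200 \le 1/10$ and $0 < \beta < 1$, so \cref{thm:LookaheadMatching} applies with $\varepsilon'$ in place of $\varepsilon$. It tells us that {\mwm}, run with accuracy $\varepsilon'$ and viewed as an algorithm with first output the sum of reduced weights $W'$ and second output the weight $\w{\hat M}$ of the greedy matching it maintains on {\sf Stack}, is a $\bigl(\textsf{MWM}, (2 + 2\varepsilon'), (3 + 20\varepsilon'), \beta\bigr)$-refined lookahead algorithm. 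Since $20\varepsilon' = \varepsilon$, its second approximation factor is exactly $3 + \varepsilon$, and its first approximation factor $\alpha_1 = 2 + 2\varepsilon'$ and $\alpha_2 = 3 + \varepsilon$ both satisfy $\alpha_1,\alpha_2 \ge 1$, as required by \cref{thm:sliding}.

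Next I would bound the space $s$ used by this refined lookahead algorithm. By \cref{obs:StreamingNumEdges}, {\mwm} keeps at most $O\bigl(\frac{\log(1/\varepsilon')}{\varepsilon'}\cdot n\bigr)$ edges on {\sf Stack} at any moment, and together with the $O(n)$ vertex potentials, each quantity representable in $O(\log \sigma)$ bits under the standing assumption on the edge weights, this gives $s = O\bigl(\frac{\log(1/\varepsilon')}{\varepsilon'}\cdot n \log \sigma\bigr) = O\bigl(\frac{\log(1/\varepsilon)}{\varepsilon}\cdot n \log \sigma\bigr)$, absorbing the constant-factor change from $\varepsilon'$ to $\varepsilon$ and using $\log n = O(\log \sigma)$ (as $\sigma \ge n/2$). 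I would also record that, with $f = \textsf{MWM}$, we have $f_{\min} = \min\{\textsf{MWM}(\{e\}) : e \in S\} = w_{\min}$ and $f(S) = \textsf{MWM}(S) \le \frac n2 \cdot w_{\max}$, so $\sigma_f(S) \le \frac n2 \cdot w_{\max}/w_{\min} = \sigma$.

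Then I would invoke \cref{thm:sliding} with these parameters: it yields a sliding window algorithm maintaining an $\alpha_2 = (3+\varepsilon)$-approximation of $\textsf{MWM}$ using space
\[
O\!\left(\frac1\beta \cdot s \cdot \log(\alpha_1 \sigma)\right) = O\!\left(\frac1\varepsilon \cdot \frac{\log(1/\varepsilon)}{\varepsilon}\, n \log \sigma \cdot \log \sigma\right) = O\!\left(\frac{\log(1/\varepsilon)}{\varepsilon^2}\, n \log^2 \sigma\right),
\]
where $\log(\alpha_1\sigma) = \log\bigl((2+2\varepsilon')\sigma\bigr) = O(\log\sigma)$. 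This algorithm is deterministic, since both {\mwm} and \cref{alg:sliding} are deterministic. Finally, although \cref{thm:sliding} is stated for the real-valued function $\textsf{MWM}$, the reported value is always $\mathcal{ALG}_2(B_1)$ or $\mathcal{ALG}_2(B_2)$, i.e.\ the weight $\w{\hat M}$ of an actual matching maintained by the corresponding instance of {\mwm}; reporting that matching itself gives the desired \textsf{MWM} sliding window algorithm, completing the proof.

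This is essentially a bookkeeping composition of \cref{thm:sliding} and \cref{thm:LookaheadMatching}, so there is no genuine obstacle; the only points needing care are the rescaling $\varepsilon' = \varepsilon/20$ (chosen so that $3 + 20\varepsilon'$ collapses to $3 + \varepsilon$ while keeping $\varepsilon' \le 1/10$), the verification that $\sigma_f(S) \le \sigma$ so the $\log(\alpha_1\sigma)$ factor in \cref{thm:sliding} matches the claimed $\log^2\sigma$ bound, and the remark that \cref{alg:sliding} can report an actual matching rather than just its weight.
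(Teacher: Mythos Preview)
Your proposal is correct and follows exactly the approach the paper intends: the paper itself does not give a detailed proof but simply states that \cref{thm:sliding,thm:LookaheadMatching} together imply the main result, and your write-up carries out precisely this composition, including the natural rescaling $\varepsilon' = \varepsilon/20$, the verification that $\sigma_f(S) \le \sigma$, and the observation that an actual matching (not just its weight) can be reported. There is nothing to add or correct.
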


\begin{remark*}
  Our $(3+\varepsilon)$-approximation algorithm for \textsf{MWM} yields the $(3+\varepsilon)$-approximation algorithm for \textsf{MM} by Crouch et al.~\cite{cms13} when {\mwm} is replaced with the \textsc{Greedy} matching algorithm (the sum of reduced weights becomes the size of the matching).
  The hard instance of their algorithm also holds for our algorithm.
\end{remark*}

\setcounter{theorem}{\value{thmsaved}}


\section{Conclusion}
\label{sec:conclusion}
In this paper, we gave two algorithms for \textsf{MWM} in the sliding window model. Our first algorithm has an approximation factor of $2+\varepsilon$ and uses space $\Ot(\sqrt{n L})$, and our second algorithm has an approximation factor of $3+\varepsilon$ and uses semi-streaming space. The approximation factor of our semi-streaming algorithm matches the approximation factor of the best semi-streaming sliding window algorithm known for (unweighted) \textsf{MM} \cite{cms13}. 

Regarding the semi-streaming space regime, since further improvements in the approximation factor would imply improvements for (unweighted) \textsf{MM}, the most natural direction for future research is to make further progress on the unweighted version of the problem first. Is there a $2.99$-approximation semi-streaming space sliding window algorithm for \textsf{MM}? 

While the known lower bounds for \textsf{MM} for one-pass streaming algorithms in the insertion-only model also apply to the sliding window model, no stronger lower bounds for the sliding window model are known. Can we prove a lower bound on the approximation factor of sliding window algorithms for \textsf{MM} that use semi-streaming space and are stronger than what is currently known for the insertion-only model, i.e., stronger than $1+\ln(2)$ \cite{Kapralov2021SpaceLB}? 



\bibliography{adkn23} 

\appendix

\section{Hard Instance for Paz and Schwartzman's Algorithm} \label{sec:lb}
In this section, we show that the Paz and Schwartzman's algorithm and its monotonic version are no better than  $(\textsf{MWM}, 3.5, \beta)$-lookahead algorithms. The definition of a lookahead algorithm given by Biabani et al. (Definition~\ref{def:lookahead}) together with the Paz and Schwartzman's algorithm thus cannot be used to improve upon the approximation factor of $3.5$.

Recall that a lookahead algorithm relies on the smoothness of the algorithm's output.
More formally, an $(f,\alpha,\beta)$-lookahead algorithm $\alg$ satisfies the condition that for any stream $ABC$, if $\alg(B) \geq (1 - \beta)\cdot\alg(AB)$ then $f(ABC) \leq \alpha\cdot \alg(BC)$  (see \cref{def:lookahead}).
In other words, if the algorithm $\alg$ outputs similar results on the streams $B$ and $AB$ then the algorithm's output on $BC$ is required to be an $\alpha$-approximation of the objective value  $f(ABC)$ of the whole stream $ABC$.

We will present a graph $G$ whose edges are divided into three substreams $A,B$ and $C$ such that {\mwm} outputs matchings of the same weight on substreams $AB$ and $B$, while the outputted matching on substream $BC$ is roughly a $3.5$-approximation of a maximum-weight matching of the entire stream $ABC$. 
The graph $G$ is such that even if we modified {\mwm} to return maximum-weight matchings among the edges stored in \textsf{Stack} then the same properties still hold. 
Thus, the hard instance is also hard for the monotonic version of the algorithm. 
The graph $G$ is depicted in \cref{fig:lookahead_example}. 

\begin{figure}[ht]
 \centering
 \includegraphics{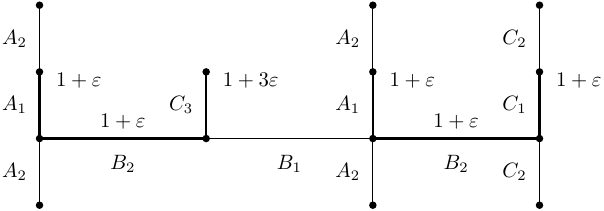}
 \caption{The edges of the graph $G$ are divided into substreams $A,B$ and $C$. The order of the edges within the substreams is indicated by subscripts (the order of the edges with the same subscript is not important). The thin edges have unit weight and the thick edges have the indicated larger weights.}
 \label{fig:lookahead_example}
\end{figure}

\paragraph*{Matchings computed on $AB$ and $B$. } First, we analyze {\mwm} separately on the substreams $A$ and $B$.
See \cref{fig:lookahead_example_AB} for the values of the reduced weights and potentials computed by the algorithm.

\begin{figure}[ht]
 \centering
 \includegraphics{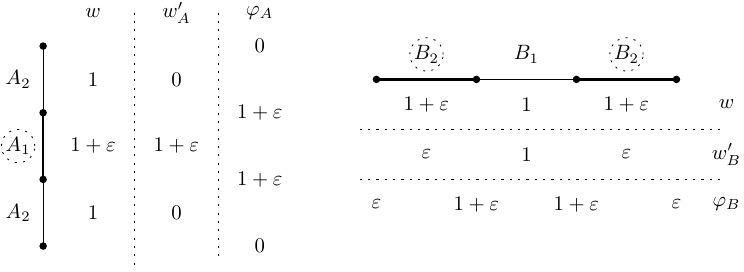}
 \caption{Reduced weights and potentials computed by {\mwm} when run separately on substreams $A$ and $B$. Recall that substream $A$ consists of two paths, while only one of them is depicted here. The edges outputted by the runs of the algorithm are marked by dotted circles.}
 \label{fig:lookahead_example_AB}
\end{figure}

Observe that the substream $A$ consists of two disjoint paths of length three. While only one of them is shown in \cref{fig:lookahead_example_AB}, the algorithm computes the same reduced weights and potentials for both paths.

We now analyze the execution of the algorithm on substream $AB$. To this end, consider the moment when the substream $A$ has been fully processed and substream $B$ begins.
Observe that each edge of $B$ is now incident to a single vertex with potential $1 + \varepsilon$.
Thus, by the construction of the algorithm, none of the edges of $B$ are pushed onto {\sf Stack}. These edges therefore have reduced weights zero and cannot be outputted by the algorithm. Furthermore, when run on $AB$, the algorithm outputs the two edges in $A_1$, i.e., $\w{\hat{M}(AB)} = 2 + 2\varepsilon$, which are the only two edges pushed onto \textsf{Stack}.

As established in Figure~\ref{fig:lookahead_example_AB}, when the algorithm runs only on the substream $B$, it outputs the two edges in $B_2$, i.e., $\w{\hat{M}(B)} = 2 + 2\varepsilon$. Hence, we have that $\w{\hat{M}(B)} = \w{\hat{M}(AB)}$. It follows that the stream $ABC$ satisfies the condition $\w{\hat{M}(B)} \geq (1 - \beta)\cdot \w{\hat{M}(AB)}$, for any value of $\beta \ge 0$, as required by the definition of a lookahead algorithm.

\paragraph*{Matching computed on $BC$.}
Now, we analyze the execution of the algorithm on the substream $BC$.
At the time when the substream $C$ begins, the reduced weights of edges in $B$ and the current potentials of the incident vertices are the same as when the algorithm is run only on the substream $B$ -- see \cref{fig:lookahead_example_AB} for these values.
See \cref{fig:lookahead_example_BC}, for the reduced weights of the edges in $C$ when we run the algorithm on the substream $BC$.

\begin{figure}[ht]
 \centering
 \includegraphics{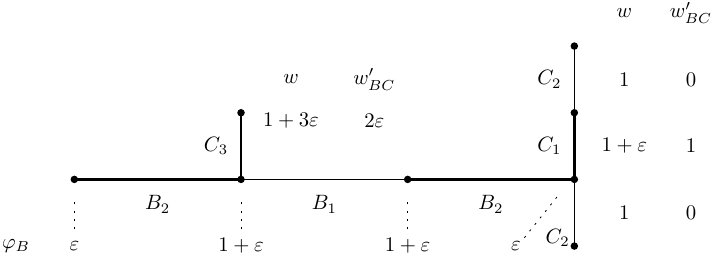}
 \caption{The reduced weights of the edges in $C$ after the execution on the substream $BC$ and the potentials of the vertices incident to the edges in $B$ at the time when the substream $B$ is processed.}
 \label{fig:lookahead_example_BC}
\end{figure}

By the end of the execution, only the two edges in $C_1$ and $C_3$ are pushed onto {\sf Stack} since the edges in $C_2$ have reduced weights zero.
The algorithm {\mwm} outputs a greedy matching of the edges pushed onto {\sf Stack} (in the reverse order they arrived).
In particular, it outputs the edges in $C_1$ and $C_3$ and they block all edges in $B$.
Hence, $\w{\hat{M}(BC)} = 2 + 4\varepsilon$. Observe further that these edges constitute a maximum-weight matching among the edges pushed onto \textsf{Stack}. 

\paragraph*{Maximum-weight Matching and Approximation Factor.}
First, observe that a maximum-weight matching in $G$ consists of all the edges that have an endpoint of degree $1$ (the edges in $A_2, C_2$, and $C_3$) and is thus of weight $7 + 3\varepsilon$.
Since $\w{\hat{M}(BC)} = 2 + 4\varepsilon$, we conclude that it is not possible for {\mwm} to yield a $(\textsf{MWM},3.5 - \varDelta,\beta)$-lookahead algorithm, for any constant $\varDelta > 0$ and suitable parameter $\beta$.

\section{More on Refined Lookahead Algorithms}\label{app:reflook}
In this section, we will prove Theorem~\ref{thm:sliding}. To this end, for convenience, we restate the definition of refined lookahead algorithms first.

\defrefinedlookahead*

\begin{algorithm*}[ht]
\caption{{\sc Lookahead Sliding Window Algorithm}} 
\label{alg:sliding}
\vspace{2mm}
\textbf{Input:} A stream $S$ with a sliding window of length $L$ \\
$\cA$: a $(f, \alpha_1,\alpha_2,\beta)$-refined lookahead algorithm with outputs $O_1$ and $O_2$
\vspace{2mm} 
\hrule
\vspace{2mm}
\textbf{Initialization:} 
\begin{algorithmic}[1]
\State Let $k \leftarrow 0$ be the number of instances
\label{abcdefg}
\end{algorithmic}
\vspace{2mm} 
\hrule
\vspace{2mm}
\textbf{Streaming:} 
\begin{algorithmic}[1] \setcounter{ALG@line}{\getrefnumber{abcdefg}}
\While{a new item $e$ of the stream $S$ is revealed}
\State Create an instance $\cI_{k + 1}$ of $\cA$
\State Feed $e$ into all existing instances $\cI_1,\dots,\cI_{k + 1}$
\State $i \leftarrow 1$
\While{$i < k$} \Comment{Deleting instances with similar value of $O_1$}
\State Let $j > i$ be the largest index for which $O_1(\cI_j) \geq (1-\beta)\cdot O_1(\cI_i)$
\If{no such $j$ exists} $j \leftarrow i + 1$ \EndIf
\State Delete instances $\cI_r$ for each $i < r < j$ 
\label{ln:DeleteSimilarInstances}
\State $i \leftarrow j$
\EndWhile
\State Let $\cI_{>1}$ be the next existing instance after $\cI_1$ \Comment{$\cI_1$ was not deleted}
\If{$\cI_{>1}$ does not exist} continue to line~\ref{sliding:renumber} \EndIf
\If {$|S(\cI_{>1})| \geq L$}  \Comment{$|S(\cI_{>1})|$ is the number of items fed into $\cI_{>1}$}
\State Delete $\cI_1$
\EndIf
\State Renumber the instances and let $k$ be the number of remaining ones \label{sliding:renumber}
\If{$|S(\cI_1)| = L$} \textbf{report} $O_2(\cI_1)$ 
\Else{} \textbf{report} $O_2(\cI_2)$
\EndIf
\EndWhile
\end{algorithmic}
\end{algorithm*}

Let $e$ be the current item of the stream being processed by \Cref{alg:sliding} and let $E$ be the current sliding window consisting of the $L$ most recently processed items (including $e$).
While processing $e$, the algorithm first creates a new instance $\cI_{k+1}$ (called a bucket in Biabani et al.~\cite{Biabani21_WeightedMatching}) of $\cA$.
Then, $e$ is fed into all existing instances $\cI_1, \dots, \cI_{k+1}$.
Next, starting from the oldest instance $\cI_1$, only its newest similar instance, determined by $O_1$ (\Cref{lookahead:prop2} of \Cref{def:refined-lookahead}), is kept and every other instance in between is deleted. 
Whether a newest similar instance exists or not, the process is then repeated with the next oldest remaining instance until reaching the newest instance.
Note that the oldest and newest instances, $\cI_1$ and $\cI_{k+1}$ respectively, are never deleted by this process. 
However, if the number of items fed into the second oldest remaining instance $\cI_{>1}$ is at least $L$, i.e., the current sliding window $E$ is fully contained in the stream $S(\cI_{>1})$ of edges processed by $\cI_{>1}$, then $\cI_1$ is deleted.
The instances are then renumbered to $\cI_1,\dots,\cI_k$, from the oldest one to the newest, such that $k$ is the number of remaining instances. 
At this stage, the sliding window $E$ is sandwiched between streams $S(\cI_1)$ and $S(\cI_2)$.  
Finally, after processing the item, if the current sliding window contains exactly the edges processed by $\cI_1$, then the algorithm reports the second output $O_2$ of the instance $\cI_1$ as the solution, otherwise it reports $O_2(\cI_2)$.

In essence, the instances of $\cA$ created by \Cref{alg:sliding} simulate runs of a traditional streaming algorithm on suffixes of the current sliding window. 
Note that the oldest run always contains all items of the sliding window and potentially some additional ones.
The idea is to maintain runs on suffixes such that the value of $O_1$ of any two consecutive runs are not too different, while the value of $O_1$ of any non-consecutive runs are sufficiently different so as to ensure that at most a logarithmic number of instances of $\cA$ is used at any point of time. 

This idea is exactly captured when $\cA$, with two outputs $O_1$ and $O_2$, is a $(f, \alpha_1, \alpha_2, \beta)$-refined lookahead algorithm (which applies the smooth histogram technique by Braverman and Ostrovsky~\cite{Braverman2007SmoothHF}). 
The first output $O_1$ is used to determine how often a run on a suffix should be maintained, which depends on the smoothness criteria given by \Cref{lookahead:prop2} of \Cref{def:refined-lookahead}. 
The second output $O_2$ is a solution which, given the smoothness assumptions of the runs, is always guaranteed to be an $\alpha_2$-approximation of the next oldest run. 
We highlight that the smoothness assumptions are only guaranteed to hold for consecutive runs whose suffixes differ by more than one item.
Then, for a stream $S$ of items from a set $X$ and a non-decreasing function $f: 2^X \to \R^+$, the number of runs is at most logarithmic in $n$ as long as $\sigma_f(S) = f(S) / f_{\min}$, where $ f_{\min} = \min \{f(e) : e \in S\}$, is polynomial in $n$. 
We prove this formally in \Cref{thm:sliding}.

\thmsliding*

\begin{proof}
  We prove that \Cref{alg:sliding} satisfies the assertion of the theorem. 
  Let $\cA$ be the used $(f, \alpha_1, \alpha_2, \beta)$-refined lookahead algorithm with the outputs $O_1$ and $O_2$.
  
  \subparagraph{Approximation.} 
  Let $E$ be the sliding window at any instance of the algorithm, i.e., the set of the $L$ most recently processed items.
  The algorithm ensures that $E$ is sandwiched between streams of items fed to $\cI_1$ and $\cI_2$, i.e., $S_2 \subseteq E \subseteq S_1$ for $S_i = S(\cI_i), i \in \{1,2\}$. 
  We are now in one of two cases, either the items of $S_1$ and $S_2$ differ by exactly one item or more than one item. 
  
  In the former case, the algorithm asserts that $|S_2| < L$, otherwise $S_1$ would have been deleted, and therefore the items of $S_1$ are exactly those of the sliding window $E$, i.e., $|S_1| = L$. 
  The reported solution is then always $O_2(S_1) = O_2(E)$ which by \Cref{lookahead:prop2} of \Cref{def:refined-lookahead} (consider the case when $E = ABC = BC$) is trivially an $\alpha_2$-approximation of $f(E)$.
  
  In the latter case, the algorithm would have, at some point, deleted instances which caused $\cI_1$ and $\cI_2$ to become consecutive instances (Line~\ref{ln:DeleteSimilarInstances} of \Cref{alg:sliding}). 
  Consider the time $t^*$ when they first became adjacent. 
  Let $S_1^*$ and $S_2^*$ be the streams processed by $\cI_1$ and $\cI_2$, respectively, in the time $t^*$. 
  The algorithm asserts that $O_1(S_2^*) \geq (1 - \beta) \cdot O_1(S_1^*)$.
  Let $C$ be the remaining items fed into the instances such that $S_1 = S_1^*C$ and $S_2 = S_2^*C$.
  Then, by \Cref{lookahead:prop2} of \Cref{def:refined-lookahead} and $f$ being non-decreasing,
  \begin{align*}
  O_2(S_2) \leq f(S_2) \leq f(E) \leq f(S_1) \leq \alpha_2 \cdot O_2(S_2). \label{eq:B2appox}
  \end{align*}
  Hence, we have that, $O_2(S_2)$, is an $\alpha_2$-approximation of $f(E)$.
  Now, if $|S_1| \neq L$ the solution reported is $O_2(S_2)$, otherwise $|S_1| = L$ and the solution reported is $O_2(S_1) = O_2(E)$. 
  We conclude that in either case an $\alpha_2$-approximation of $f(E)$ is reported.
  
  \subparagraph{Space.}
  Let $k$ be the maximum number of instances stored by the algorithm after processing an item.
  After the process of deleting and renumbering the instances, the algorithm ensures that $O_1(\cI_{i+2}) < (1-\beta) \cdot O_1(\cI_i)$ holds for any instances $\cI_i$ and $\cI_{i+2}$.
  Thus for the largest odd number $k'$ not exceeding $k$, 
  \[ 
  (1 + \beta)^{\frac{k'-1}{2}} O_1(\cI_{k'}) < O_1(\cI_1).
  \]
  
  Recall that $\frac{f(S(\cI_1))}{f(S(\cI_{k'}))} \leq \sigma$. Then, by \Cref{lookahead:prop1} of \Cref{def:refined-lookahead}, we have that $\frac{O_1(\cI_1)}{O_1(\cI_{k'})} \leq \alpha_1 \sigma$. 
  It follows that 
  \[
  \frac{k'-1}{2} < \log_{1 + \beta} (\alpha_1 \sigma) 
  \text{~~~~~and~~~~~}
  k' = O\left(\frac{1}{\beta} \cdot \log (\alpha_1  \sigma)\right).
  \] 
  This implies the result since there are only ever $k+1 \leq k' + 2$ instances of $\cA$, each of which uses at most $s$ words of memory.
  \end{proof}

 A motivating example of the refined lookahead definition is exactly the Paz-Schwartzman algorithm {\mwm} with the first output $O_1$ as the sum of reduced weights $W'_S$, the second output $O_2$ as the weight of the returned matching $\w{\hat{M}(S)}$, and function $f$ as the weight of a maximum-weight matching $\textsf{MWM}(S)$.
 Now, consider the graph given in \Cref{sec:lb} (see \cref{fig:lookahead_example}).
We have that $\w{\hat{M}(AB)}= \w{\hat{M}(B)} = 2 + 2\varepsilon$, $W'_{AB} = 2 + 2\varepsilon$ and $W'_B = 1 + 2\varepsilon$.
We showed in \Cref{sec:lb} that this is indeed a hard instance for (standard) lookahead algorithms when the weight of the matching computed is used as the smoothness constraint (recall that $(1-\beta) \cdot \w{\hat{M}(AB)} \le \w{\hat{M}(B)}$ is then required in a hard instance, which is the case here). On the other hand, refined lookahead algorithms allow us to use the sum of reduced weights as the smoothness constraint. Since $(1-\beta) \cdot W'_{AB} \nleq W'_B$, for small enough $\beta$, the instance therefore is not hard for refined lookahead algorithms.


\end{document}